\theoremstyle{plain}
\newtheorem{theorem}{Theorem}
\newtheorem{proposition}{Proposition}
\newtheorem{sectheorem}{Theorem}[section]
\newtheorem{seclemma}{Lemma}[section]
\theoremstyle{remark}
\newtheorem{assump}{Assumption}
\newtheorem{secassump}{Assumption}[section]
\newtheorem{remark}{Remark}
\newtheorem{secdefinition}{Definition}[section]
\newtheorem{secexample}{Example}[section]
\def\Snospace~{\S{}}
\def\indep{\perp\!\!\!\perp}
\newcommand{\argmax}{\operatornamewithlimits{argmax}}
\newcommand{\cov}{\text{Cov}}
\newcommand{\var}{\text{Var}}
\newcommand{\E}{{\bf E}}
\newcommand{\R}{\mathbb{R}}
\newcommand{\F}{\mathcal{F}}
\newcommand{\N}{\mathcal{N}}
\newcommand{\C}{\mathcal{C}}
\newcommand{\prob}{{\bf P}}
\newcommand{\plimarrow}{\stackrel{p}\longrightarrow}
\newcommand{\dlimarrow}{\stackrel{d}\longrightarrow}
\providecommand{\abs}[1]{\lvert#1\rvert} 
\providecommand{\norm}[1]{\lVert#1\rVert}
\newcommand*{\medcap}{\mathbin{\scalebox{1.5}{\ensuremath{\cap}}}}
\let\emptyset\varnothing
\providecommand{\abs}[1]{\lvert#1\rvert} 
\providecommand{\norm}[1]{\lVert#1\rVert}
\def\supplementfilename{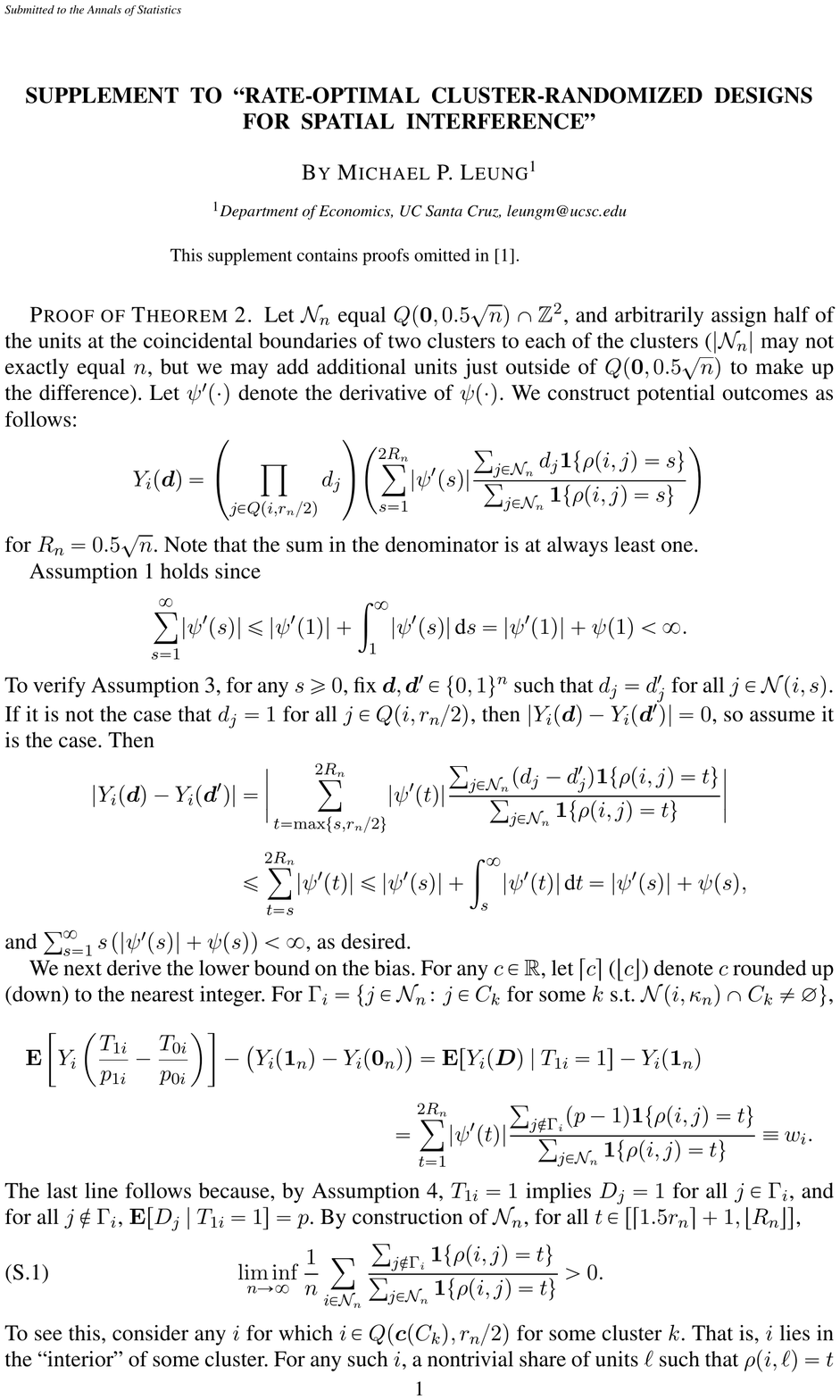}
\def\numbersupplementpages{\the\pdflastximagepages}
\newif\ifarXiv
\begin{document}

\begin{frontmatter}
\title{Rate-Optimal Cluster-Randomized Designs for Spatial Interference}
\runtitle{Rate-Optimal Designs for Interference}

\begin{aug}
\author[A]{\fnms{Michael P.} \snm{Leung}\ead[label=e1]{leungm@ucsc.edu}}
\address[A]{Department of Economics, UC Santa Cruz, \printead{e1}}
\end{aug}

\begin{abstract}
  We consider a potential outcomes model in which interference may be present between any two units but the extent of interference diminishes with spatial distance. The causal estimand is the global average treatment effect, which compares outcomes under the counterfactuals that all or no units are treated. We study a class of designs in which space is partitioned into clusters that are randomized into treatment and control. For each design, we estimate the treatment effect using a Horvitz-Thompson estimator that compares the average outcomes of units with all or no neighbors treated, where the neighborhood radius is of the same order as the cluster size dictated by the design. We derive the estimator's rate of convergence as a function of the design and degree of interference and use this to obtain estimator-design pairs that achieve near-optimal rates of convergence under relatively minimal assumptions on interference. We prove that the estimators are asymptotically normal and provide a variance estimator. For practical implementation of the designs, we suggest partitioning space using clustering algorithms. 
\end{abstract}


\begin{keyword}
\kwd{causal inference}
\kwd{interference}
\kwd{experimental design}
\kwd{spatial dependence}
\end{keyword}

\end{frontmatter}


\section{Introduction}\label{sintro}

Consider a population of $n$ experimental units. Denote by $Y_i(\bm{d})$ the potential outcome of unit $i$ under the counterfactual that the population is assigned treatments according to the vector $\bm{d} = (d_i)_{i=1}^n \in \{0,1\}^n$, where $d_i=1$ ($d_i=0$) implies unit $i$ is assigned to treatment (control). Treatments assigned to alters can influence the ego since $Y_i(\bm{d})$ is a function of $d_j$ for $j\neq i$, what is known as {\em interference}. 

An important estimand of practical interest is the {\em global average treatment effect}
\begin{equation*}
  \theta_n = \frac{1}{n} \sum_{i\in\N_n} \big( Y_i(\bm{1}_n) - Y_i(\bm{0}_n) \big) 
\end{equation*}

\noindent where $\bm{1}_n$ ($\bm{0}_n$) is the $n$-dimensional vector of ones (zeros). This compares average outcomes under the counterfactuals that all or no units are treated. Each average can only be directly observed in the data under an extreme design that assigns all units to the same treatment arm, which would necessarily preclude observation of the other counterfactual. Common designs used in the literature, including those studied here, assign different units to different treatment arms, so neither average is directly observed in the data. Nonetheless, we show that asymptotic inference on $\theta_n$ is possible for a class of cluster-randomized designs under spatial interference where the degree of interference diminishes with distance.

Many phenomena diffuse primarily through physical interaction. The government of a large city may wish to compare the effect of two different policing strategies on crime, but more intensive policing in one neighborhood may displace crime to adjacent neighborhoods \citep{blattman2021place,verbitsky2012causal}. A rideshare company may wish to compare the performance of two different pricing algorithms, but these may induce behavior that generates spatial externalities, such as congestion. Other phenomena exhibiting spatial interference include infectious diseases in animal \citep{donnelly2003impact} and human \citep{miguel2004worms} populations, pollution \citep{giffin2020generalized}, and environmental conservation programs \citep{paler2015social}.

Much of the existing literature assumes that interference is summarized by a low-dimensional exposure mapping and that units are individually randomized into treatment or control either via Bernoulli or complete randomization \citep[e.g.][]{aronow_estimating_2017,basse2019randomization,forastiere2020identification,manski2013identification,toulis2013estimation}. Jagadeesan et al.\ \cite{jagadeesan2020designs} and Ugander et al.\ \cite{ugander2013graph} also utilize exposure mappings but depart from unit-level randomization. They propose new designs that introduce cluster dependence in unit-level assignments in order to improve estimator precision. We build on this literature by (1) studying rate-optimal choices of both cluster-randomized designs and Horvitz-Thompson estimators, (2) avoiding exposure mapping restrictions on interference, which can be quite strong \citep{eckles2017design}, and (3) developing a distributional theory for the estimator and a variance estimator. 

Regarding (2), most exposure mappings used in the literature imply that only units within a small, known distance from the ego can interfere with the ego's outcome. We instead study a weaker restriction on interference similar to \cite{leung2021causal}, which states that the degree of interference decreases with distance but does not necessarily zero out at any given distance. This is analogous to the widespread use of mixing-type conditions in the time series and spatial literature instead of $m$-dependence because the latter rules out interesting forms of autocorrelation, including models as basic as AR$(1)$. 

Regarding (1), we study cluster-randomized designs in which units are partitioned into spatial clusters, clusters are independently randomized into treatment and control, and the assignment of a unit is dictated by the assignment of its cluster. By introducing correlation in assignments, such designs can avoid overlap problems common under Bernoulli randomization, which improves the rate of convergence. For analytical tractability, we focus on designs in which clusters are equally sized squares, each design distinguished by the number of such squares. We pair each design with a Horvitz-Thompson estimator that compares the average outcomes of units with all or no treated neighbors, where the neighborhood radius is of the same order as the cluster size dictated by the design. See \autoref{nbhds} for a depiction of a hypothetical design and neighborhoods used to construct the estimator. 

Our results inform how the analyst should choose the number of clusters (and hence, the cluster size and neighborhood radius of the estimator) to minimize the rate of convergence of the estimator. Notably, existing work on cluster randomization with interference utilizes small clusters (those with asymptotically bounded size). We show that such designs are generally asymptotically biased under the weaker restriction on interference we impose, which motivates the large-cluster designs we study.

\begin{figure}
  \centering
  \includegraphics[scale=0.6]{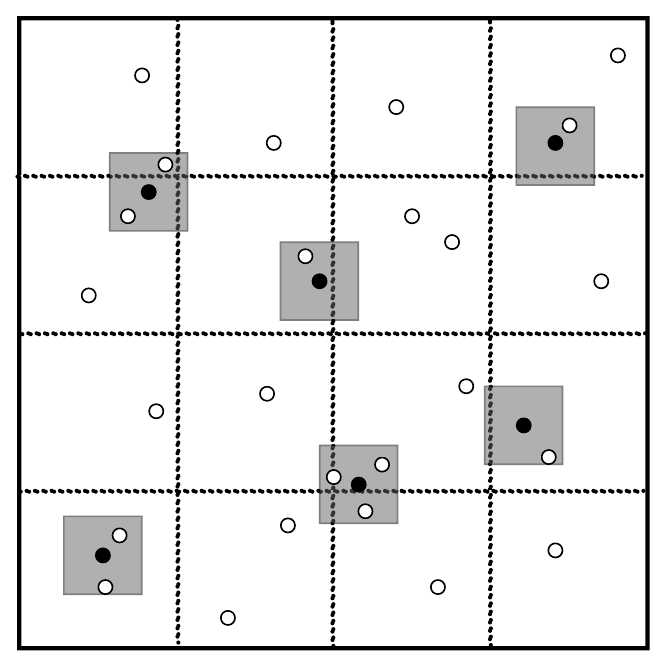}
  \caption{White and black dots depict units. White squares correspond to clusters and gray squares to neighborhoods of black units used to construct the estimator.}
  \label{nbhds}
\end{figure}

Finally, regarding (3), we show that the estimator is asymptotically normal and provide a variance estimator. These results appear to be novel, as no existing central limit theorems seem to apply to our setup in which treatments exhibit cluster dependence, clusters can be large, and units in different clusters are spatially dependent due to interference. As usual, the variance estimator is biased due to heterogeneity in unit-level treatment effects. However, we show that, in a superpopulation setting in which potential outcomes are weakly spatially dependent, the bias is asymptotically negligible.

Based on our theory, we provide practical recommendations for implementing cluster-randomized designs in \autoref{sprac}. Of course, rate-optimality results do not determine the choice of nonasymptotic constants that are often important in practice under smaller sample sizes. Still, they constitute an important first step toward designing practical procedures. Due to the generality of the setting, which imposes quite minimal assumptions on interference, it seems reasonable to first study rate-optimality, as finite-sample optimality appears to require substantial additional structure on the problem. We note that existing results on graph cluster randomization, which require stronger restrictions on interference than this paper, are nonetheless limited to rates, and how ``best'' to construct clusters in practice has been an open question.

\subsection{Related Literature}

Most of the literature supposes interference is mediated by a network. Studying optimal design in this setting is difficult because network clusters can be highly heterogeneous in topology, and their graph-theoretic properties can closely depend on the generative model of the network \citep{leung2021network}. We study spatial interference, and to make the optimal design problem analytically tractable, we focus on a class of designs that partitions space into equally sized squares while exploring in simulations the performance of more realistic designs that partition using clustering algorithms. We discuss in \autoref{sconclude} the (pessimistic) prospects of extending our approach to network interference.

There is relatively little work on optimal experimental design under interference. Viviano \cite{viviano2020experimental} proposes variance-minimizing two-wave experiments under network interference. Baird et al.\ \cite{baird2018optimal} study the power of randomized saturation designs under partial interference. 

A recent literature studies designs for interference that depart from unit-level randomization. A key paper motivating our work is \cite{ugander2013graph}, who propose graph cluster randomization designs under network interference. Ugander and Yin \cite{ugander2020randomized} study a new variant of these designs, and \cite{harshaw2021design} consider related designs for bipartite experiments. These papers assume interference is summarized by exposure mappings, which enables the construction of unbiased estimators and use of designs in which clusters are small. Under our weaker restriction on interference, we show that large clusters are required to reduce bias, which creates a bias-variance trade-off. 

Eckles et al.\ \cite{eckles2017design} show that graph cluster randomization can reduce the bias of common estimators for $\theta_n$ in the absence of correctly specified exposure mappings. Pouget-Abadie et al.\ \cite{pouget2018optimizing} propose two-stage cluster-randomized designs to minimize bias under a monotonicity restriction on interference. Several papers \citep{basse2018model,jagadeesan2020designs,sussman2017elements} study linear potential outcome models and propose designs targeting the direct average treatment effect, rather than $\theta_n$. Under a normal-sum model, \cite{basse2018model} compute the mean-squared error of the difference-in-means estimator, which they use to suggest model-assisted designs. 

The aforementioned papers on cluster randomization target global effects such as $\theta_n$ \citep[also see][]{chin2019regression,choi2017estimation}. Much of the literature on interference considers fundamentally different estimands defined by exposure mappings. When these mappings are misspecified, the estimands are functions of assignment probabilities, in which case their interpretations can be specific to the experiments run \citep{savje2021causal,savje2017average}. Hu et al.\ \cite{hu2022average} (\S 5) views this as ``largely unavoidable'' in nonparametric settings with interference. Our results show that inference on $\theta_n$, which avoids this issue, is possible under restrictions on interference weaker than those typically used in the literature. Additionally, papers in the literature impose an overlap assumption, which implicitly restricts the estimand \citep{leung2021causal}. We study cluster-randomized designs that directly satisfy overlap.

There is a large literature on cluster-randomized trials \citep[e.g.][]{hayes2017cluster,park2020assumption}. This literature predominantly studies partial interference, meaning that units are divided into clusters such that those in distinct clusters do not interfere. That is, the clusters themselves impose restrictions on interference. In our setting, clusters are determined by the design and do not restrict interference.

Finally, \cite{aronow2020design}, \cite{pollmann2020causal}, and \cite{zigler2021bipartite} study spatial interference in a different ``bipartite'' setting in which treatments are assigned to units or locations that are distinct from the units whose outcomes are of interest. This shares some similarities with spatial cluster randomization, where different spatial regions are randomized into treatment, so some of the ideas here may be applicable to optimal design there.

\subsection{Outline}

The next section defines the model of spatial interference and the class of designs and estimators studied. In \autoref{srmse}, we derive the estimator's rate of convergence, discuss rate-optimal designs, and provide practical design recommendations. In \autoref{sinfer}, we prove that the estimator is asymptotically normal, propose a variance estimator, and characterize its asymptotic properties. We report results from a simulation study in \autoref{smc}, exploring the use of spectral clustering to implement the designs. Finally, \autoref{sconclude} concludes.

\section{Setup}\label{smodel}

Let $\N_n$ be a set of $n$ units. We study experiments in which units are cluster-randomized into treatment and control, postponing to \autoref{sdesign} the specifics of the design. For each $i\in\N_n$, let $D_i$ be a binary random variable where $D_i=1$ indicates that $i$ is assigned to treatment and $D_i=0$ indicates assignment to control. Let $\bm{D} = (D_i)_{i\in\N_n}$ be the vector of realized treatments and $\bm{d} = (d_i)_{i\in\mathcal{N}_n} \in \{0,1\}^n$ denote a non-random vector of counterfactual treatments. Recall from \autoref{sintro} that $Y_i(\bm{d})$ is the {\em potential outcome} of unit $i$ under the counterfactual that units are assigned treatments according to $\bm{d}$. Formally, for each $n\in\mathbb{N}$ and $i\in\N_n$, $Y_i(\cdot)$ is a non-random function from $\{0,1\}^n$ to $\R$. We denote $i$'s factual, or observed, outcome by $Y_i = Y_i(\bm{D})$ and maintain the standard assumption that potential outcomes are uniformly bounded.

\begin{assump}[Bounded Outcomes]\label{aboundY}
  $\sup_{n\in\mathbb{N}} \max_{i\in\N_n} \max_{\bm{d}\in\{0,1\}^n} \abs{Y_i(\bm{d})} < \infty$.
\end{assump}

\subsection{Spatial Interference}\label{sspaint}

Thus far, the model allows for unrestricted interference in the sense that $Y_i(\bm{d})$ may vary essentially arbitrarily in any component of $\bm{d}$. In order to obtain a positive result on asymptotic inference, it is necessary to impose restrictions on interference to establish some form of weak dependence across unit outcomes. The existing literature primarily focuses on restrictions captured by {\em $K$-neighborhood exposure mappings}, which imply that $D_j$ can only interfere with $Y_i(\bm{D})$ if the distance between $i,j$ is at most $K$. We will discuss how this assumption is potentially restrictive and establish results under weaker conditions.

We assume each unit is located in $\mathbb{R}^2$. Label each unit by its location, so that $\N_n \subset \mathbb{R}^2$, and equip this space with the sup metric $\rho(i,j) = \max_{t=1,2} \abs{i_t-j_t}$ for $i=(i_1,i_2)$, $j=(j_1,j_2)$, and $i,j\in\mathbb{R}^2$. Let $Q(i,r) = \{j\in\mathbb{R}^2\colon \rho(i,j) \leq r\}$, the ball of radius $r$ centered at $i$. Under the sup metric, balls are squares, and the radius is half the side length of the square. Letting $\bm{0}$ denote the origin, we consider a sequence of {\em population regions} $\{Q(\bm{0},R_n)\}_{n\in\mathbb{N}}$ such that 
\begin{equation*}
  \N_n \subset Q(\bm{0},R_n), \quad\text{where}\quad R_n n^{-1/2} \rightarrow c \in (0,\infty).
\end{equation*}

\noindent That is, units are located in the square $Q(\bm{0},R_n)$ with growing radius $R_n$. Combined with the next increasing domain assumption, the number of units in the region is $O(n)$, but throughout, we will simply assume the number is exactly $n$.

\begin{assump}[Increasing Domain]\label{aid}
  There exists $\rho_0>0$ such that, for any $n\in\mathbb{N}$ and $i,j\in\N_n$, $\rho(i,j) \geq \rho_0$.
\end{assump}

\noindent This allows units to be arbitrarily irregularly spaced, subject to being minimally separated by some distance $\rho_0$, a widely used sampling framework in the spatial literature \citep[e.g.][]{jenish2009central}. In contrast, ``infill'' asymptotic approaches that do not require minimal separation and instead assume increasingly dense sampling from a fixed region can yield nonstandard limiting behavior \citep{lahiri1996inconsistency}. For some applications, the spatial distribution of units may exhibit ``hotspots'' with unusually high densities, perhaps making the infill approach more plausible. Some work adopts a hybrid of the two approaches \citep{lahiri2003central,lahiri2006resampling}, and it may be possible to extend our results to this framework.

Let $\R_+$ denote the set of non-negative reals and
\begin{equation*}
  \N(i,K) = Q(i,K) \cap \N_n
\end{equation*}

\noindent denote the {\em $K$-neighborhood} of $i$. We study the following model of interference similar to that proposed by \cite{leung2021causal}.

\begin{assump}[Interference]\label{aani}
  There exists a non-increasing function $\psi\colon \R_+ \rightarrow \R_+$ such that $\psi(0) \in (0,\infty)$, $\sum_{s=1}^\infty s\, \psi(s) < \infty$, and, for all $s\in\mathbb{R}_+$,
  \begin{equation*}
    \sup_{n\in\mathbb{N}} \max_{i\in\N_n} \max\big\{\abs{Y_i(\bm{d}) - Y_i(\bm{d}')}\colon \bm{d},\bm{d}'\in\{0,1\}^n, d_j=d_j' \,\,\forall j\in\N(i,s)\big\} \leq \psi(s). 
  \end{equation*}
\end{assump}

\noindent To interpret this, observe that $\max\big\{\abs{Y_i(\bm{d}) - Y_i(\bm{d}')}\colon \bm{d},\bm{d}'\in\{0,1\}^n, d_j=d_j' \,\,\forall j\in\N(i,s)\big\}$ maximizes over pairs of treatment assignment vectors that fix the assignments of units in $i$'s $s$-neighborhood but allow assignments to freely vary outside of this neighborhood. It therefore measures the degree of spatial interference in terms of the maximum change to $i$'s potential outcome caused by manipulating treatments assigned to units $k$ ``distant'' from $i$ in the sense that $\rho(i,k) > s$. The assumption requires the degree of interference to vanish with the neighborhood radius $s$ so that treatments assigned to more distant alters interfere less with the ego. The rate at which interference vanishes is controlled by $\psi(s)$, which is required to decay at a rate faster than $s^{-2}$. 

\begin{remark}[Necessity of rate condition]\label{rrate}
  Assumption 3(b) of \cite{jenish2009central} and Assumption 4(c) of \cite{jenish2012spatial} impose the same minimum rate of decay as \autoref{aani} on various measures of spatial dependence (mixing or near-epoch dependence coefficients) to establish central limit theorems. If the rate is slower, then the variance can be infinite. For example, consider a spatial process $\{Z_i\}_{i\in\mathcal{N}_n}$ such that units are positioned on the integer lattice $\mathbb{Z}^2$ and $\cov(Z_i,Z_j) = f(\rho(i,j))$ for some function $f(\cdot)$ for any $i,j$. Then
  \begin{equation*}
    \var\left( \frac{1}{\sqrt{n}} \sum_{i=1}^n Z_i \right) = \frac{1}{n} \sum_{i=1}^n \sum_{s=0}^\infty \sum_{j: \rho(i,j)=s} \cov(Z_i,Z_j) = \sum_{s=0}^\infty f(s) \frac{1}{n} \sum_{i=1}^n \abs{\{j\colon \rho(i,j)=s\}}.
  \end{equation*}

  \noindent Note that $\abs{\{j\colon \rho(i,j)=s\}} \leq 8s$, with equality achieved for all $i$ not near the boundary of the population region. Thus, for large $n$, a finite variance requires that $f(s)$ decay with $s$ faster than $s^{-2}$.
\end{remark}

We next discuss two models of interference satisfying \autoref{aani}. The first is the standard approach of specifying a $K$-neighborhood exposure mapping. Such a mapping is given by $T_i = T(i,\bm{D},\N_n)$ with the crucial property that its dimension does not depend on $n$, unlike that of $\bm{D}$. The approach is to assume that the low-dimensional $T_i$ summarizes interference by reparameterizing potential outcomes as
\begin{equation}
  Y_i(\bm{D}) = \tilde Y_i(T_i). \label{cspec}
\end{equation}

\noindent That is, once we fix $i$'s exposure mapping $T_i$, its potential outcome is fully determined. No less important, it is also typically assumed that exposure mappings are restricted to a unit's $K$-neighborhood, where $K$ is small, meaning fixed with respect to $n$. Formally, $T(i,\bm{d},\N_n) = T(i,\bm{d}',\N_n)$ for any $\bm{d},\bm{d}'$ such that $d_j=d_j'$ for all $j \in \N(i,K)$, which implies that the treatment assigned to a unit $j$ only interferes with $i$ if $\rho(i,j) \leq K$. In practice, choices with $K=1$ are most common, for example $T_i = (D_i,S_i)$ for $S_i = \bm{1}\{\sum_{j\in\mathcal{N}_n} G_{ij}T_j > 0\}$ or $S_i = \sum_{j\in\mathcal{N}_n} G_{ij}T_j$ where $G_{ij} = \bm{1}\{\rho(i,j)\leq 1\}$. In these examples, $D_i$ captures the direct effect of the treatment, and $S_i$ captures interference from units near $i$.

Crucially, $T_i$ and $K$ must be known to the analyst in this approach, which is often a strong requirement. In contrast, \autoref{aani} enables the analyst to impose \eqref{cspec} {\em while requiring neither to be known}. Indeed, if there exists a $K$-neighborhood exposure mapping satisfying \eqref{cspec}, then \autoref{aani} holds with $\psi(s) = c\,\bm{1}\{s \leq K\}$ for some $c$ sufficiently large. 

Furthermore, \autoref{aani} allows for more complex forms of interference ruled out by \eqref{cspec} in which interference decays more smoothly with distance, rather than being truncated at some distance $K$. The former is analogous to mixing conditions, which are widespread in the time series and spatial literature, while the latter is analogous to $m$-dependence, which rules out interesting forms of autocorrelation, including models as basic as AR$(1)$. 

In the spatial context, our assumption accommodates, for example, the Cliff-Ord autoregressive model \citep{cliff1973spatial,cliff1981spatial}, which is a workhorse model of spatial autocorrelation used in a variety of fields, including geography \citep{getis2008history}, ecology \citep{valcu2010spatial}, and economics \citep{anselin2001spatial}. A typical formulation of the model is
\begin{equation*}
  Y_i = \alpha + \lambda \sum_{j\in\mathcal{N}_n} W_{ij}Y_j + D_i\beta + \varepsilon_i, 
\end{equation*}

\noindent where we assume $\varepsilon_i$ is uniformly bounded to satisfy \autoref{aboundY}. Let $\bm{W}$ be the $n\times n$ spatial weight matrix whose $ij$th entry is $W_{ij}$. These weights typically decay with distance $\rho(i,j)$ in a sense to be made precise below. While this model is highly stylized, the important aspect it captures is autocorrelation through the spatial autoregressive parameter $\lambda$. If this is nonzero, then there is no $K$-neighborhood exposure mapping for which \eqref{cspec} holds, a point previously noted by \cite{eckles2017design} in the context of network interference.

To see this, first note that coherency of the model requires nonsingularity of $\bm{I} - \lambda \bm{W}$, where $\bm{I}$ is the $n\times n$ identity matrix. Let $\bm{V}$ be the inverse of this matrix and $V_{ij}$ its entry corresponding to units $(i,j)$. Then the reduced form of the model is
\begin{equation}
  Y_i(\bm{D}) \equiv Y_i = \sum_{j\in\mathcal{N}_n} V_{ij}(\alpha + D_j\beta + \varepsilon_j), \label{clifford}
\end{equation}

\noindent a spatial ``moving average'' model with spatial weight matrix $\bm{V}$. (See \autoref{smc} for some examples of $\bm{V}$.) Noticeably, $Y_i(\bm{D})$ can potentially depend on $D_j$ for any $j\in\mathcal{N}_n$, which is ruled out if one imposes a $K$-neighborhood exposure mapping. 

Outcomes satisfying \eqref{clifford} are near-epoch dependent, a notion of weak spatial dependence, when the weights decay with spatial distance in the following sense:
\begin{equation}
  \sup_{n\in\mathbb{N}} \max_{i\in\N_n} \sum_{j\in\N_n} \abs{V_{ij}} \rho(i,j)^\gamma < \infty \label{coani}
\end{equation}

\noindent for some $\gamma>0$ \citep[see Proposition 5 and eq.\ (13) of][]{jenish2011spatial}. The next result shows that this condition is sufficient for verifying \autoref{aani} if $\gamma>2$.

\begin{proposition}\label{pclifford}
  Suppose potential outcomes are given by \eqref{clifford} and spatial weights satisfy \eqref{coani} for some $\gamma > 2$. Then \autoref{aani} holds with $\psi(s) = c\,\min\{s^{-\gamma},1\}$ for some $c \in (0,\infty)$ that does not depend on $s$.
\end{proposition}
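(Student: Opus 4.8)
The plan is to exploit the linearity of the reduced form \eqref{clifford} so that the interference modulus appearing in \autoref{aani} collapses to a weighted tail sum of the $|V_{ij}|$, which \eqref{coani} then controls.

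First I would fix $n$, $i\in\N_n$, $s\in\R_+$, and a pair $\bm{d},\bm{d}'\in\{0,1\}^n$ that agree on $\N(i,s)$. Substituting \eqref{clifford}, the $\alpha$ and $\varepsilon_j$ terms cancel, leaving $Y_i(\bm{d})-Y_i(\bm{d}')=\beta\sum_{j\in\N_n}V_{ij}(d_j-d_j')$, and since $d_j=d_j'$ whenever $\rho(i,j)\leq s$ (i.e.\ $j\in\N(i,s)$) the sum runs only over $j$ with $\rho(i,j)>s$. Bounding $|d_j-d_j'|\leq 1$ then gives $|Y_i(\bm{d})-Y_i(\bm{d}')|\leq|\beta|\sum_{j\in\N_n:\,\rho(i,j)>s}|V_{ij}|$.

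Next I would bound this tail sum. Set $C:=\sup_{n}\max_{i\in\N_n}\sum_{j\in\N_n}|V_{ij}|\rho(i,j)^\gamma$, which is finite by \eqref{coani}. Every $j$ in the sum satisfies $\rho(i,j)\geq\max\{s,\rho_0\}$ — here $\rho_0$ is the minimal-separation constant of \autoref{aid}, and $j=i$ is automatically excluded because $\rho(i,i)=0\not>s$. Hence I may replace $1$ by $\rho(i,j)^\gamma\max\{s,\rho_0\}^{-\gamma}$ inside the sum to obtain $\sum_{j\in\N_n:\,\rho(i,j)>s}|V_{ij}|\leq C/\max\{s,\rho_0\}^\gamma$. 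Taking the supremum over $n$, $i$, and admissible $(\bm{d},\bm{d}')$ bounds the left side of \autoref{aani} by $|\beta|C/\max\{s,\rho_0\}^\gamma$, which is at most $|\beta|C\,s^{-\gamma}$ when $s\geq 1$ and at most $|\beta|C\rho_0^{-\gamma}$ when $s<1$; therefore it is at most $c\min\{s^{-\gamma},1\}$ with, say, $c:=\max\{1,|\beta|C\max\{1,\rho_0^{-\gamma}\}\}$ (the leading $1$ also covers the degenerate case $\beta=0$).

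Finally I would verify that $\psi(s):=c\min\{s^{-\gamma},1\}$ satisfies the remaining requirements of \autoref{aani}: $\psi(0)=c\in(0,\infty)$; $\psi$ is non-increasing because $s\mapsto s^{-\gamma}$ is non-increasing on $(0,\infty)$ for $\gamma>0$; and $\sum_{s=1}^\infty s\,\psi(s)=c\sum_{s=1}^\infty s^{1-\gamma}<\infty$ precisely because $\gamma>2$ — this is the only place the strengthened hypothesis $\gamma>2$ (rather than merely $\gamma>0$) is used, consistent with \autoref{rrate}. I do not expect a substantive obstacle; the sole delicate point is the blow-up of $s^{-\gamma}$ as $s\downarrow 0$, which is absorbed by $\rho_0$ via \autoref{aid} — the same device that keeps the uncontrolled diagonal weight $V_{ii}$ (which \eqref{coani} does not bound) out of the relevant sum.
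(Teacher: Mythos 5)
Your proof is correct and follows essentially the same route as the paper's: expand the linear reduced form \eqref{clifford}, observe that only $j$ with $\rho(i,j)>s$ survive (so the diagonal term drops out), and convert the resulting tail sum into the weighted sum controlled by \eqref{coani}, with the $\gamma>2$ hypothesis used only to verify $\sum_{s=1}^\infty s\,\psi(s)<\infty$. The one (immaterial) difference is the small-$s$ regime: the paper bounds $\abs{Y_i(\bm{d})-Y_i(\bm{d}')}$ for $s<1$ by a constant via the bounded-outcomes \autoref{aboundY}, whereas you invoke the minimal separation $\rho_0$ of \autoref{aid} to keep the tail sum finite there; both are maintained assumptions of the framework, so either device is fine.
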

\begin{proof}
  Fix $s\geq 0$ and any $\bm{d},\bm{d}'\in\{0,1\}^n$ such that $d_j=d_j'$ for all $j\in\N(i,s)$. For $s<1$, $\abs{Y_i(\bm{d}) - Y_i(\bm{d}')} \leq c_1 \equiv 2\sup_n \max_i \max_{\bm{d}} \abs{Y_i(\bm{d})}$. For $s\geq 1$,
  \begin{equation*}
    \abs{Y_i(\bm{d}) - Y_i(\bm{d}')} \leq \abs{\beta} \sum_{j\in\N_n} \abs{V_{ij}} \abs{d_j - d_j'} \bm{1}\{\rho(i,j) > s\} \leq s^{-\gamma} \abs{\beta} \sum_{j\in\N_n} \abs{V_{ij}} \rho(i,j)^\gamma.
  \end{equation*}

  \noindent Defining $c_2 = \sup_n \max_i \sum_{j\in\N_n} \abs{\beta} \abs{V_{ij}} \rho(i,j)^\gamma$ and $c = \max\{c_1,c_2\}+1 > 0$, the inequality in \autoref{aani} holds with $\psi(s) = c\,\min\{s^{-\gamma},1\}$ by construction. Furthermore, $c<\infty$ by \eqref{coani} and uniform boundedness of $\{\varepsilon_i\}_{i\in\N_n}$, and $\psi(s)$ satisfies $\sum_{s=1}^\infty s\,\psi(s) < \infty$ because $\gamma>2$.
\end{proof}

\noindent This result shows that, unlike the standard approach of imposing a $K$-neighborhood exposure mapping, \autoref{aani} can allow for richer forms of interference in which alters that are arbitrarily distant from the ego can interfere with the ego's response.

\begin{remark}[Literature]\label{moreleung0}
  \autoref{aani} and \autoref{pclifford} are spatial analogs of Assumptions 4 and 6 and Proposition 1 of \cite{leung2021causal} who studies interference mediated by an unweighted network, Bernoulli designs, and a different class of estimands defined by exposure mappings satisfying overlap. We study the global average treatment effect and cluster-randomized designs that induce overlap by introducing dependence in treatment assignments, and we further derive rate-optimal designs. These differences require an entirely distinct asymptotic theory.
\end{remark}

\subsection{Design and Estimator}\label{sdesign}

Much of the literature on interference considers designs in which units are individually randomized into treatment and control, either via Bernoulli or complete randomization. A common problem faced by such designs is limited overlap, meaning that some realizations of the exposure mapping occur with low probability. For example, suppose that \eqref{cspec} holds with exposure mapping $T_i = \sum_{j\in\N_n} \bm{1}\{\rho(i,j)\leq K\}D_j$, the number of treated units in $i$'s $K$-neighborhood. Then in a Bernoulli design, for large values of $K$, $\prob(T_i=0)$ is small, tending to zero with $K$ at an exponential rate. This is problematic for a Horvitz-Thompson estimator such as $n^{-1} \sum_{i\in\N_n} (p_i(t)^{-1} \bm{1}\{T_i=t\} - p_i(t')^{-1} \bm{1}\{T_i=t'\})Y_i$ where $p_i(t) = \prob(T_i=t)$ since its variance grows rapidly with $K$ if either $t$ or $t'$ is zero. Ugander et al.\ \cite{ugander2013graph} propose cluster-randomized designs, which reduce this problem by deliberately introducing dependence in treatment assignments across certain units. 

We consider the following class of such designs. We assign units to mutually exclusive clusters by partitioning the population region $Q(\bm{0},R_n)$ into $m_n \leq n$ equally sized squares, assuming for simplicity that $m_n \in \{4^s\colon s\in\mathbb{N}\}$. That is, to obtain increasingly more clusters, we first divide the population region into four squares, then divide each of these squares into four squares, and so on, as in \autoref{sequence}. Label the $m_n$ squares $Q_1, \dots, Q_{m_n}$, and call 
\begin{equation*}
  C_k = Q_k \medcap \N_n
\end{equation*}

\noindent cluster $k$. Then the number of units in each cluster is uniformly $O(n/m_n)$ under \autoref{aid}, and the radius of each cluster is 
\begin{equation*}
  r_n \equiv R_n/\sqrt{m_n} = \sqrt{n/m_n} + o(1),
\end{equation*}

\noindent which we assume is greater than 1. We also assume there are no units on the common boundaries of different squares, so the squares partition $\N_n$. 

\begin{figure}
  \centering
  \includegraphics[scale=0.5]{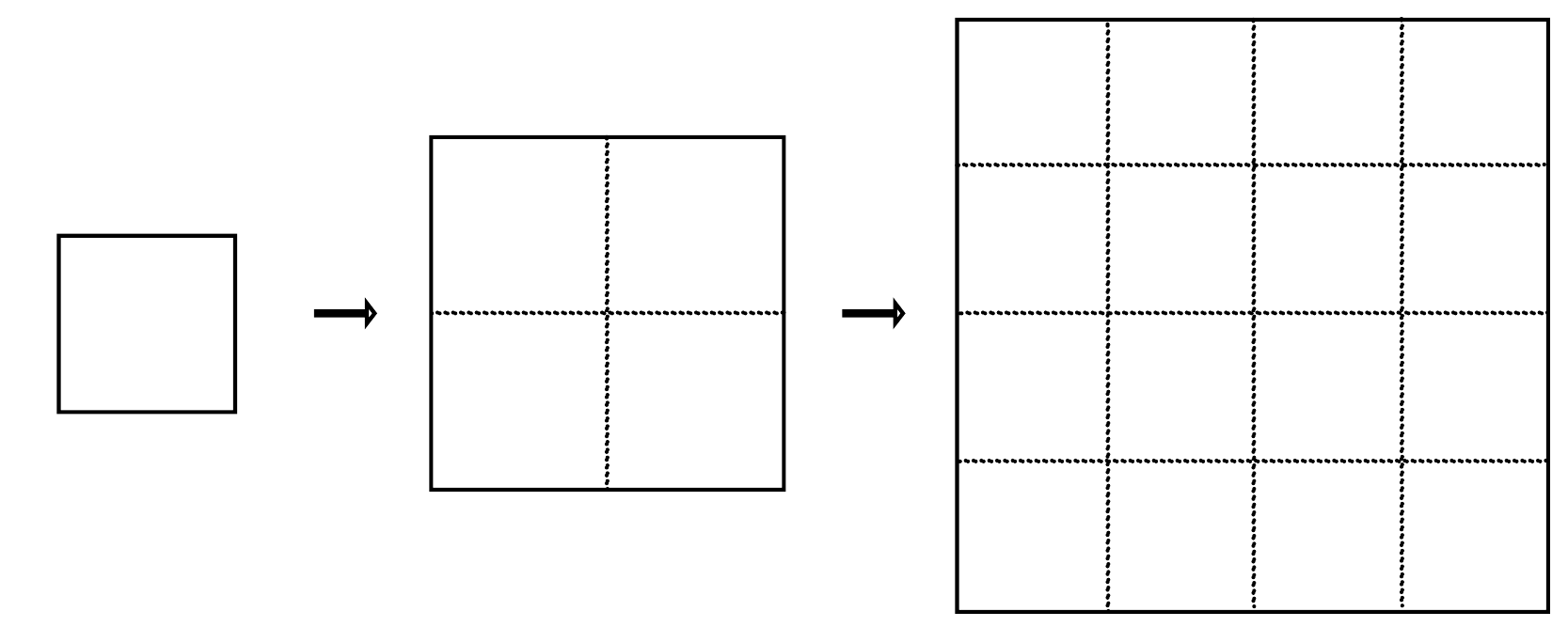}
  \caption{Sequence of population regions and clusters as $n\rightarrow\infty$.}
  \label{sequence}
\end{figure}

A {\em cluster-randomized design} first independently assigns each cluster to treatment and control with some probability 
\begin{equation*}
  p \in (0,1)
\end{equation*}

\noindent that is fixed with respect to $n$. Then within a treated (control) cluster $C_k$, all $i\in C_k$ are assigned $D_i=1$ ($D_i=0$). In order to emphasize that we use this design in later theorems, we state it as a separate assumption.

\begin{assump}[Design]\label{adesign}
  For any $n$, $\bm{D}$ is realized according to a cluster-randomized design with $m_n$ clusters constructed as above.
\end{assump}

Note that $m_n$ will be required to diverge with $n$ since a large number of clusters is needed for the estimator to concentrate. If $m_n$ is order $n$, then $r_n=O(1)$, so clusters are asymptotically bounded in size, the usual case studied in the literature, which includes unit-level Bernoulli randomization as a special case. If $m_n$ is of smaller order, then cluster sizes grow with $n$.

To construct the estimator, define the {\em neighborhood exposure indicator} 
\begin{equation*}
  T_{ti} = \prod_{j\in \N(i,\kappa_n)} \bm{1}\{D_j=t\} \quad\text{for}\quad t \in \{0,1\},\quad \kappa_n = r_n/2.
\end{equation*}

\noindent This is an indicator for whether $i$'s $\kappa_n$-neighborhood is entirely treated ($t=1$) or untreated ($t=0$). Unlike $K$-neighborhood exposure mappings, the radius $\kappa_n$ will be allowed to diverge. Let $p_{ti} = \E[T_{ti}]$. We study the Horvitz-Thompson estimator
\begin{equation*}
  \hat\theta = \frac{1}{n} \sum_{i\in\N_n} Z_i \quad\text{for}\quad Z_i = \left( \frac{T_{1i}}{p_{1i}} - \frac{T_{0i}}{p_{0i}} \right) Y_i.
\end{equation*}

\noindent Intuitively, $n^{-1} \sum_{i=1}^n Y_i T_{1i} p_{1i}^{-1}$ estimates $n^{-1} \sum_{i\in\N_n} Y_i(\bm{1}_n)$ using the outcomes of units whose neighbors within radius $\kappa_n$ are all treated. Since the radius depends on $m_n$ through $r_n$, $\hat\theta$ is a function of the number of clusters dictated by the design. \autoref{nbhds} depicts the relationship between the clusters and the $\kappa_n$-neighborhoods that determine exposure.

Since nontrivial designs will include both treated and untreated units, $\hat\theta$ is biased for the global average treatment effect. The choice of design can trade off the size of the bias against that of the variance. In particular, small choices of $m_n$ (few clusters, large radii) induce lower bias and higher variance. In \autoref{srmse}, we discuss nearly rate-optimal choices of $m_n$ for which the bias is asymptotically negligible.

\begin{remark}[Overlap]\label{roverlap}
  Under Bernoulli randomization, overlap needs to be imposed as a separate assumption for asymptotic inference. By overlap we mean the probability weights $p_{1i}$ and $p_{0i}$ are uniformly bounded away from zero and one, which imposes potentially strong restrictions on the types of exposure mappings the analyst can use, as previously illustrated. In our setup, however, overlap is directly satisfied because $p_{1i} = p^k$ and $p_{0i}=(1-p)^k$, where $k$ is the number of clusters that intersect $i$'s $\kappa_n$-neighborhood. Our choice of $\kappa_n$ implies $k\in [1,4]$ for all $i$, so overlap holds. 
\end{remark}

\begin{remark}[Neighborhood radius]
  Let $\bm{c}(C_k) \in \R^2$ be the centroid of cluster $C_k$. The choice of $\kappa_n = r_n/2$ ensures that, for any unit $i$ in the ``interior'' of a cluster in the sense that $i \in Q(\bm{c}(C_k),r_n/2)$, $i$'s $\kappa_n$-neighborhood also lies within that cluster, in which case the exposure probabilities are simply given by the cluster assignment probability: $p_{1i}=p$ and $p_{0i}=1-p$. If we had instead chosen, say, $\kappa_n=r_n$, then this would be true only for the centroid, while for the remaining units, $p_{1i}$ and $p_{0i}$ could be as small as $p^4$, which means less overlap and a more variable estimate. For the purposes of the asymptotic theory, the main requirement is that $\kappa_n$ has the same asymptotic order as $r_n$. If $\kappa_n$ were of smaller order, then results in \autoref{srmse} show that the bias of $\hat\theta$ could be non-negligible, whereas if $\kappa_n$ were of larger order, then $k$ in \autoref{roverlap} would grow with $n$, overlap would be limited, and $\var(\hat\theta)$ could be large.
\end{remark}

\section{Rate-Optimal Designs}\label{srmse}

We next derive the rate of convergence of $\hat\theta$ as a function of $n$, $m_n$, and $\psi(\cdot)$, which we use to obtain rate-optimal choices of $m_n$. Recall that designs are parameterized by $m_n$, which determines the number and sizes of clusters, and also that $\hat\theta$ depends on $m_n$ through the neighborhood exposure radius $\kappa_n$, so we will be optimizing over both the design and the radius that determines the estimator. 

\subsection{Rate of Convergence}

We first provide asymptotic upper bounds on the bias and variance of $\hat\theta$. For two sequences $\{a_n\}_{n\in\mathbb{N}}$ and $\{b_n\}_{n\in\mathbb{N}}$, we write $a_n \lesssim b_n$ to mean $a_n/b_n = O(1)$ and $a_n \gtrsim b_n$ to mean $b_n/a_n = O(1)$.

\begin{theorem}\label{tmse}
  Under Assumptions \ref{aboundY}--\ref{adesign}, $\abs{\E[\hat\theta] - \theta_n} \lesssim \psi(r_n/2)$ and $\var(\hat\theta) \lesssim m_n^{-1}$.
\end{theorem}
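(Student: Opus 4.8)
The plan is to prove the two bounds separately: the bias by a direct pointwise estimate, and the variance by the Efron--Stein inequality applied to the independent cluster-level assignments. For the bias, note first that by Remark~\ref{roverlap} we have $p_{1i}=p^{k_i}$ and $p_{0i}=(1-p)^{k_i}$ with $k_i\in\{1,\dots,4\}$, so the probability weights are bounded away from $0$ and $1$ uniformly in $i$ and $n$. On the event $\{T_{1i}=1\}$ one has $D_j=1$ for every $j\in\N(i,\kappa_n)$, so $\bm{D}$ and $\bm{1}_n$ agree on $\N(i,\kappa_n)$, and Assumption~\ref{aani} gives the pointwise bound $T_{1i}\,\abs{Y_i-Y_i(\bm{1}_n)}\leq\psi(\kappa_n)$. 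Taking expectations and dividing by $p_{1i}$ yields $\abs{\E[T_{1i}Y_i]/p_{1i}-Y_i(\bm{1}_n)}\leq\psi(\kappa_n)$, and symmetrically $\abs{\E[T_{0i}Y_i]/p_{0i}-Y_i(\bm{0}_n)}\leq\psi(\kappa_n)$. Averaging over $i\in\N_n$ gives $\abs{\E[\hat\theta]-\theta_n}\leq 2\psi(\kappa_n)=2\psi(r_n/2)$, which is the first claim.

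For the variance, since $\bm{D}$---and hence each $Z_i$ and $\hat\theta$---is a deterministic function of the $m_n$ i.i.d.\ cluster assignments $A_1,\dots,A_{m_n}$ (the $p_{ti}$ being fixed constants), I would apply Efron--Stein: $\var(\hat\theta)\leq\tfrac12\sum_{k=1}^{m_n}\E[(\hat\theta-\hat\theta^{(k)})^2]$, where $\hat\theta^{(k)}$ recomputes $\hat\theta$ after replacing $A_k$ by an independent copy. The crux is the uniform bounded-difference estimate $\abs{\hat\theta-\hat\theta^{(k)}}\lesssim m_n^{-1}$. To get it, split $n\abs{\hat\theta-\hat\theta^{(k)}}\leq\sum_{i\in\N_n}\abs{Z_i-Z_i^{(k)}}$ according to whether $\N(i,\kappa_n)$ meets cluster $k$. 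If it does, $\abs{Z_i-Z_i^{(k)}}$ is at most a constant (since $T_{ti}\leq 1$, $p_{ti}$ is bounded below, and $Y_i$ is bounded by Assumption~\ref{aboundY}); such units lie within sup-distance $\kappa_n=r_n/2$ of the side-$2r_n$ square $Q_k$, so by Assumption~\ref{aid} there are $\lesssim r_n^2$ of them, and $r_n^2\lesssim n/m_n$. If it does not, then $T_{1i},T_{0i}$ are unchanged and $\abs{Z_i-Z_i^{(k)}}=\abs{T_{1i}/p_{1i}-T_{0i}/p_{0i}}\cdot\abs{Y_i(\bm{D})-Y_i(\bm{D}^{(k)})}\lesssim\psi(\rho(i,C_k))$ by Assumption~\ref{aani}, since the two assignment vectors differ only on coordinates in $C_k$. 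Binning these units by distance to $Q_k$---each unit-width annulus at distance $s$ holds $\lesssim r_n+s$ units---and using $\sum_s s\,\psi(s)<\infty$ (hence also $\sum_s\psi(s)<\infty$) bounds their total contribution by $\lesssim r_n\lesssim\sqrt{n/m_n}$. Since $m_n\leq n$ forces $\sqrt{n/m_n}\leq n/m_n$, we conclude $n\abs{\hat\theta-\hat\theta^{(k)}}\lesssim n/m_n$, i.e.\ $\abs{\hat\theta-\hat\theta^{(k)}}\lesssim m_n^{-1}$ uniformly in $k,n$; Efron--Stein then gives $\var(\hat\theta)\lesssim m_n\cdot m_n^{-2}=m_n^{-1}$.

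The routine parts are the bias computation and the annulus-counting geometry. The step I expect to be the main obstacle is the uniform bounded-difference bound: one must balance the (up to) $n/m_n$ units near cluster $k$---each contributing $O(1)$, which produces the dominant term---against the decaying interference from the far units, and verify that $r_n$ is of strictly smaller order than $n/m_n$ so that both pieces collapse to the single rate $m_n^{-1}$. A direct covariance expansion $\var(\hat\theta)=n^{-2}\sum_{i,j}\cov(Z_i,Z_j)$ is an alternative but is messier, since $\cov(Z_i,Z_j)$ need not vanish for far-apart $i,j$: even when the exposure indicators $T_{ti},T_{tj}$ become independent, $Y_i$ and $Y_j$ remain weakly dependent through interference, so one would have to track a near-epoch-dependence--type tail---which the Efron--Stein route sidesteps.
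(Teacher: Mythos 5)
Your bias argument is exactly the paper's: on $\{T_{1i}=1\}$ the realized assignment agrees with $\bm{1}_n$ on $\N(i,\kappa_n)$, so \autoref{aani} gives $\abs{\E[Y_iT_{1i}p_{1i}^{-1}]-Y_i(\bm{1}_n)}\leq\psi(\kappa_n)$, and symmetrically for the control arm. Your variance argument, however, is a genuinely different and correct route. The paper expands $\var(\hat\theta)=n^{-2}\sum_{i,j}\cov(Z_i,Z_j)$, splits the pairs according to whether $j\in\Lambda_i$ (the dependency neighborhood in \eqref{Lambdai}), bounds the near pairs trivially by $m_n^{-1}$, and then devotes a separate lemma (\autoref{Bbound}) to showing the far-pair covariances contribute only $(nm_n)^{-1/2}$ --- precisely the near-epoch-dependence tail you identify as the messy part. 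Your Efron--Stein argument sidesteps that tail entirely: since $\hat\theta$ is a deterministic function of the $m_n$ i.i.d.\ cluster assignments, the almost-sure bounded-difference estimate suffices, and your accounting is sound --- resampling cluster $k$ perturbs the $O(r_n^2)=O(n/m_n)$ units whose $\kappa_n$-neighborhoods meet $Q_k$ by $O(1)$ each (overlap plus \autoref{aboundY}), while the remaining units keep their exposure indicators and contribute $\sum_{s}(r_n+s)\psi(s)\lesssim r_n$ by the summability in \autoref{aani}, which is dominated by $n/m_n$ since $r_n>1$. What the paper's longer covariance route buys is reusable structure: the set $\Lambda_i$ and the bound $[B]\lesssim(nm_n)^{-1/2}$ feed directly into the central limit theorem (\autoref{tclt}) and the variance estimator (\autoref{tvar}), whereas Efron--Stein yields only the upper bound on the variance. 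For \autoref{tmse} itself, your proof is complete and arguably cleaner; the only cosmetic point is that in the far-unit case the two assignment vectors agree on $\N(i,s)$ for $s$ strictly below $\rho(i,C_k)$, so the bound should be stated as $\psi(s)$ for the lower endpoint of each distance bin, which is exactly how your annulus sum uses it.
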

\begin{proof}
  First, we bound the bias. If $T_{1i}=1$, then all units in $Q(i,\kappa_n)$ are treated, so by \autoref{aani},
  \begin{equation*}
    \abs{\E[Y_iT_{1i}p_{1i}^{-1}] - Y_i(\bm{1}_n)} = \abs{\E[Y_i(\bm{D}) \mid T_{1i}=1] - Y_i(\bm{1}_n)} \leq \psi(\kappa_n).
  \end{equation*}

  \noindent The same argument applies to $\abs{\E[Y_iT_{0i}p_{0i}^{-1}] - Y_i(\bm{0}_n)}$, so combining these results, we obtain the rate for the bias.

  Next, we bound the variance. The following is an important object in our analysis and also for later constructing the variance estimator:
  \begin{equation}
    \Lambda_i = \{j\in\N_n\colon \exists \,k \,\text{ s.t. }\, C_k \cap \N(i,\kappa_n) \neq \emptyset,\, C_k \cap \N(j,\kappa_n) \neq \emptyset\}.
    \label{Lambdai}
  \end{equation}
  
  \noindent This is the set of units $j$ whose $\kappa_n$-neighborhoods intersect a cluster $C_k$ that also intersects $i$'s $\kappa_n$-neighborhood. We have
  \begin{equation}
    \var\left( \frac{1}{n} \sum_{i\in\N_n} Z_i \right) = \frac{1}{n^2} \sum_{i\in\N_n} \sum_{j \in \Lambda_i} \cov(Z_i,Z_j) + \frac{1}{n^2} \sum_{i\in\N_n} \sum_{j \not\in \Lambda_i} \cov(Z_i,Z_j) \equiv [A] + [B]. \label{AB}
  \end{equation}

  \noindent Note that $\Lambda_i$ contains units from at most 16 clusters (the worst case is when $Q(i,\kappa_n)$ intersects four clusters), and clusters contain uniformly $O(n/m_n)$ units by Lemma A.1 of \cite{jenish2009central}.
  By \autoref{aboundY} and \autoref{roverlap}, $\{Z_i\}_{i\in\N_n}$ is uniformly bounded, so
  \begin{equation*}
    [A] \lesssim \frac{1}{n^2} \cdot n \cdot \frac{n}{m_n} = \frac{1}{m_n}.
  \end{equation*}

  The difficult part of the argument is obtaining a tight enough rate for $[B]$, which requires control over the dependence between outcomes of units $i,j$ that are ``distant'' in the sense that $j\not\in\Lambda_i$. \autoref{Bbound} in \autoref{sproofs} proves that $[B] \lesssim (nm_n)^{-1/2}$. Therefore, $\var(\hat\theta) \lesssim m_n^{-1} + (nm_n)^{-1/2}$, which is at most $m_n^{-1}$ since $m_n \leq n$.
\end{proof}

Our second result provides asymptotic lower bounds.

\begin{theorem}\label{tsharp}
  Suppose $\psi(s) = \min\{s^{-\gamma},0.5\}$ for some $\gamma>2$. Under \autoref{adesign}, there exists a sequence of units and potential outcomes $\{\{Y_i(\cdot)\}_{i\in\mathcal{N}_n}\colon n\in\mathbb{N}\}$ satisfying Assumptions \ref{aboundY}--\ref{aani} such that $\abs{\E[\hat\theta] - \theta_n} \gtrsim \psi(1.5r_n)$ and $\var(\hat\theta) \gtrsim m_n^{-1}$.
\end{theorem}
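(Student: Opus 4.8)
The plan is to construct, for each $n$, a worst-case configuration that matches the upper bounds in \autoref{tmse}. Place the $n$ units on a regular lattice inside $Q(\bm{0},R_n)$ (offset so that none lie on cluster boundaries), which satisfies \autoref{aid}. Call a unit $i$ \emph{interior} if $i\in Q(\bm{c}(C_k),r_n/2)$ for its cluster $C_k$; since each $C_k$ is a full square of radius $r_n$, a fixed positive fraction of every cluster, hence of all $n$ units, is interior. I would take potential outcomes $Y_i(\bm{d}) = \epsilon\, Y_i^{(1)}(\bm{d}) + Y_i^{(2)}(\bm{d})$ for a small constant $\epsilon>0$ fixed at the end, with
\begin{equation*}
  Y_i^{(1)}(\bm{d}) = \bm{1}\{i\text{ interior}\}\sum_{j\in\N_n} w_{ij}\,d_j,\qquad w_{ij}=\rho(i,j)^{-\gamma-2},\qquad Y_i^{(2)}(\bm{d}) = \bm{1}\{i\text{ interior}\}\,d_i .
\end{equation*}
Here $Y^{(1)}$ manufactures the bias and $Y^{(2)}$ pins the variance from below. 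Because $\sum_{j}\rho(i,j)^{-\gamma-2}\asymp\sum_{t\ge1}t\cdot t^{-\gamma-2}<\infty$, $Y^{(1)}$ is uniformly bounded, giving \autoref{aboundY}. For \autoref{aani}: $Y_i^{(2)}$ depends only on $d_i$, so it contributes nothing to the left-hand side at any $s$ (as $i\in\N(i,s)$ always), while holding $\bm{d}$ fixed on $\N(i,s)$ changes $\epsilon Y_i^{(1)}$ by at most $\epsilon\sum_{\rho(i,j)>s}w_{ij}\lesssim\epsilon\min\{s^{-\gamma},1\}$; taking $\epsilon$ small enough makes this at most $\min\{s^{-\gamma},0.5\}=\psi(s)$ for every $s$. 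Writing $\hat\theta^{(\ell)}$ and $\theta_n^{(\ell)}$ for the Horvitz–Thompson estimator and estimand built from $Y^{(\ell)}$, linearity gives $\hat\theta=\epsilon\hat\theta^{(1)}+\hat\theta^{(2)}$ and $\theta_n=\epsilon\theta_n^{(1)}+\theta_n^{(2)}$.

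For the bias I would evaluate $\E[\hat\theta^{(1)}]-\theta_n^{(1)}$ exactly. For interior $i\in C_k$, $T_{1i}=A_k$, $T_{0i}=1-A_k$, $p_{1i}=p$, $p_{0i}=1-p$ with $A_k=\bm{1}\{C_k\text{ treated}\}$; splitting $\sum_j w_{ij}D_j$ into the parts with $j\in C_k$ and $j\notin C_k$ and using $\E[A_k/p-(1-A_k)/(1-p)]=0$ together with independence across clusters, the $j\notin C_k$ terms contribute mean zero and
\begin{equation*}
  \E[\hat\theta^{(1)}]-\theta_n^{(1)} = -\frac1n\sum_{i\text{ interior}}\ \sum_{j\notin C_k(i)} w_{ij},
\end{equation*}
so every summand has the same sign and there is no cancellation. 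Since $\rho(i,\bm{c}(C_k))\le r_n/2$ and $C_k\subseteq Q(\bm{c}(C_k),r_n)$, the sup-metric triangle inequality gives $C_k\subseteq Q(i,1.5r_n)$ for interior $i$, hence $\sum_{j\notin C_k(i)}w_{ij}\ge\sum_{\rho(i,j)>1.5r_n}w_{ij}$; a counting argument — the annulus $Q(i,3r_n)\setminus Q(i,1.5r_n)$ meets $Q(\bm{0},R_n)$ in area $\gtrsim r_n^2$, even for interior units near the population boundary, using $m_n\ge4$ — bounds this below by $\gtrsim r_n^2\cdot r_n^{-\gamma-2}\asymp\psi(1.5r_n)$. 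As interior units number $\asymp n$ and $\hat\theta^{(2)}$ is exactly unbiased for $\theta_n^{(2)}$, we get $|\E[\hat\theta]-\theta_n|=\epsilon\,|\E[\hat\theta^{(1)}]-\theta_n^{(1)}|\gtrsim\psi(1.5r_n)$.

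For the variance, $\hat\theta^{(2)}$ is explicit: using $A_k^2=A_k$ and $(1-A_k)A_k=0$, the summand of $\hat\theta^{(2)}$ equals $A_k/p$ for interior $i\in C_k$ and $0$ otherwise, so $\hat\theta^{(2)}=(np)^{-1}\sum_k |\{i\in C_k:i\text{ interior}\}|\,A_k$ is a sum of independent terms with weights $\asymp n/m_n$; hence $\var(\hat\theta^{(2)})\asymp n^{-2}\cdot m_n(n/m_n)^2=m_n^{-1}$. Then $\var(\hat\theta)=\epsilon^2\var(\hat\theta^{(1)})+\var(\hat\theta^{(2)})+2\epsilon\,\cov(\hat\theta^{(1)},\hat\theta^{(2)})$; applying \autoref{tmse} to $Y^{(1)}$ gives $\var(\hat\theta^{(1)})\lesssim m_n^{-1}$, so Cauchy–Schwarz bounds the cross term by $O(\epsilon m_n^{-1})$, and choosing $\epsilon$ a sufficiently small constant yields $\var(\hat\theta)\ge\tfrac12\var(\hat\theta^{(2)})\gtrsim m_n^{-1}$. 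Finally $\epsilon$ is set to the minimum of the thresholds required here and in the verification of \autoref{aani}.

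The main obstacle is making these two roles coexist inside a single outcome function: $Y^{(1)}$ is exactly what creates the bias but it also inflates the variance, while $Y^{(2)}$ controls the variance from below but must leave the bias untouched (which it does, being exactly unbiased). The $\epsilon$-rescaling, together with the Cauchy–Schwarz control of the cross-covariance via the already-established upper bound \autoref{tmse}, is the device that resolves this tension; the remaining work — the exact bias identity and, especially, the boundary bookkeeping in the annulus-counting lower bound — is routine but requires some care.
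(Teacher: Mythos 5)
Your construction is correct and is essentially the paper's own argument: put mass on the ``interior'' units whose exposure indicators collapse to the cluster assignment (so $p_{1i}=p$, $p_{0i}=1-p$), use a power-law moving-average component to force a same-signed per-unit bias of order $\sum_{\rho(i,j)>1.5r_n}\rho(i,j)^{-\gamma-2}\asymp\psi(1.5r_n)$ (the $1.5r_n$ coming exactly from your $\rho(i,\bm{c}(C_k))\le r_n/2$ plus cluster radius $r_n$ observation), and use a direct-effect component that makes $\hat\theta^{(2)}$ an average of independent cluster indicators with weights $\asymp n/m_n$ to pin the variance at $m_n^{-1}$, with the $\epsilon$-rescaling plus Cauchy--Schwarz against the \autoref{tmse} upper bound neutralizing the cross term. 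The only (cosmetic) fix needed is to exclude $j=i$ from the weighted sum, since $w_{ii}=\rho(i,i)^{-\gamma-2}$ is undefined; nothing else in the argument changes.
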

\begin{proof}
  See supplemental material \citep{leung2022supplement}. 
\end{proof}

\noindent The result shows that we can construct potential outcomes satisfying the assumptions of \autoref{tmse} such that the bias is at least order $\psi(1.5r_n)$ and the variance at least $m_n^{-1}$. As discussed in \autoref{sintro}, existing work on cluster randomization under interference assumes clusters have asymptotically bounded size, which, in our setting, implies $r_n=O(1)$. \autoref{tsharp} implies that the bias of the Horvitz-Thompson estimator can then be bounded away from zero, showing that existing results strongly rely on the exposure mapping assumption to obtain unbiased estimates. In the absence of this assumption, it is necessary to consider designs in which cluster sizes are large to ensure the bias vanishes with $n$.

\subsection{Design Examples}\label{sdex}

\autoref{tmse} implies the mean-squared error of $\hat\theta$ is at most of order $\psi(r_n/2)^2 + m_n^{-1}$, and \autoref{tsharp} provides a similar asymptotic lower bound. Under either bound, the bias increases with $m_n$ while the variance decreases, so there exists a bias-variance trade-off in the choice of design. We next derive rates for $m_n$ that minimize or nearly minimize the upper bound under different assumptions on $\psi(\cdot)$. Based on these results, we make recommendations for practical implementation in the next subsection.

{\em Oracle design.} Suppose $\psi(s)$ is known to decay with $s$ at some rate $\phi(s)$. Then by definition of $r_n$, a rate-optimal design chooses $m_n$ to minimize $\phi( 0.5R_nm_n^{-1/2} )^2 + m_n^{-1}$. 

{\em Exposure mappings.} If we assume \eqref{cspec} holds for some $K$-neighborhood exposure mapping, then $\psi(s) = 0$ for all $s>K$. If $K$ is known, then by choosing $m_n = R_n^2(2K)^{-2}$, we have $\kappa_n = K$ and zero bias. In this case, clusters are asymptotically bounded in size, the estimator converges at rate $n^{-1/2}$, and both the design and estimator qualitatively coincide with those of \cite{ugander2013graph}. 

On the other hand, if $K$ is unknown, then for a nearly rate-optimal design, we can choose $\kappa_n$ to grow at a slow rate so that it eventually exceeds any fixed $K$. This may be achieved by choosing $m_n$ to grow slightly slower than $n$, say $n/\log(n)$. Then for large enough $n$, the bias is zero, and the rate of convergence is $\sqrt{\log(n)/n}$.

{\em Exponential decay.} Common specifications of the spatial weight matrix $\bm{W}$ in the Cliff-Ord model imply that $\psi(s)$ decays exponentially with $s$, for example, the row-normalized matrix
\begin{equation}
  W_{ij} = \frac{\bm{1}\{\rho(i,j) \leq 1\}}{\sum_{k\in\N_n} \bm{1}\{\rho(i,k) \leq 1\}}. \label{Wadj}
\end{equation}

\noindent If $\psi(s)$ is known to decay at some exponential rate but the exponent is unknown, then we may choose $m_n = n^{1-\epsilon}$ for any small $\epsilon>0$ for a nearly rate-optimal design, which yields a rate of convergence of $n^{-0.5(1-\epsilon)}$, close to an $n^{-1/2}$-rate. This shows that rates close to $n^{-1/2}$ are attainable in the absence of exposure mapping assumptions, despite targeting the global average treatment effect.

{\em Worst-case decay.} In practice, we may have little prior knowledge about $\psi(s)$. Recall that \autoref{aani} requires the rate of decay to be no slower than $s^{-2(1+\epsilon)}$ for $\epsilon>0$. As discussed in \autoref{rrate}, this is the slowest rate for spatial dependence that ensures a finite variance. For this rate, since $R_n$ is order $\sqrt{n}$, the bias is order $(n/m_n)^{-(1+\epsilon)}$. Without knowledge of $\epsilon$, we can settle for a nearly rate-optimal design by setting $\epsilon=0$ and choosing $m_n$ to minimize $(n/m_n)^{-2} + m_n^{-1}$, which yields $m_n = n^{2/3}$ and an $n^{-1/3}$-rate of convergence. Under this design, cluster sizes grow at the rate $r_n^2 = n^{1/3}$.

In the last three designs, the bias is $o(m_n^{-1/2})$, which is of smaller order than the variance. This makes the bias negligible from an asymptotic standpoint, but it would be useful to develop bias-reduction methods. We also reiterate that, while this analysis only provides rates, it is apparently by necessity at this level of generality. A finite-sample optimal design seems to require substantially more knowledge of the functional form of $\psi(\cdot)$.

\subsection{Practical Recommendations}\label{sprac}

The designs in the previous section rely on varying degrees of knowledge of $\psi(\cdot)$, the rate at which interference vanishes with distance. In practice, this is likely unknown, so we recommend operating under the worst-case rate of decay discussed in the previous subsection. The {\em default conservative choice} we recommend using in practice is the near-optimal rate described there, namely
\begin{equation}
  m_n = n^{2/3}. \label{dcc}
\end{equation}

To construct the clusters, we recommend partitioning space into $m_n$ clusters using a clustering algorithm, such as spectral clustering. A confidence interval (CI) for $\theta_n$ is given in \eqref{CI}. In \autoref{smc}, we explore in simulations the performance of the CI when clusters are constructed according to these recommendations.

Our large-sample theory assumes space is subdivided into evenly sized squares in order to avoid the difficult problem of optimizing over arbitrary shapes. However, since units are typically irregularly distributed in practice, division into equally sized squares may be inefficient, which is why we recommend the use of clustering algorithms. We suggest spectral clustering because it recovers, under weak conditions, low-conductance clusters \citep{peng2017partitioning}, and low conductance is the key property of clusters utilized in our proofs, as discussed in \autoref{sconclude}.

\section{Inference}\label{sinfer}

We next state results for asymptotic inference on $\theta_n$. Define $\sigma_n^2 = \var(\sqrt{m_n}\hat\theta)$.

\begin{assump}[Non-degeneracy]\label{anondeg}
  $\liminf_{n\rightarrow\infty} \sigma_n^2 > 0$.
\end{assump}

\noindent This is a standard condition and reasonable to impose in light of the lower bound on the variance derived in \autoref{tsharp}.

\begin{theorem}\label{tclt}
  Suppose $m_n\rightarrow\infty$ and $m_n=o(n)$. Under Assumptions \ref{aboundY}--\ref{anondeg},
  \begin{equation*}
    \sigma_n^{-1} \sqrt{m_n} (\hat\theta - \E[\hat\theta]) \dlimarrow \N(0,1).
  \end{equation*}
\end{theorem}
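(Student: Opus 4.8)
The plan is to establish asymptotic normality via Stein's method (exchangeable-pair or dependency-graph version), exploiting the fact that the design induces exact conditional independence once cluster assignments are fixed, combined with the weak spatial dependence guaranteed by \autoref{aani}. First I would rescale and write $\hat\theta - \E[\hat\theta] = n^{-1}\sum_{i\in\N_n}(Z_i - \E[Z_i])$, and introduce the cluster-level aggregates $W_k = \sum_{i\in C_k}(Z_i - \E Z_i)$. The crucial observation is that each $Z_i$ depends on $\bm{D}$ only through the cluster assignments of the (at most four) clusters intersecting $Q(i,\kappa_n)$, so $Z_i$ is a function of a bounded number of the independent Bernoulli cluster labels; consequently, any two $W_k, W_\ell$ are independent unless their clusters are ``neighbors'' in the sense that some $C_m$ with a unit whose $\kappa_n$-neighborhood meets both. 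This gives a dependency graph on $\{1,\dots,m_n\}$ of uniformly bounded degree, and each $W_k$ is bounded by $O(n/m_n)$ in absolute value by \autoref{aboundY}, \autoref{roverlap}, and the cardinality bounds from Lemma A.1 of \cite{jenish2009central}.

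The main obstacle is that $\hat\theta - \E[\hat\theta]$ is \emph{not} exactly a sum over the dependency graph of the $W_k$: the spatial interference in the potential outcomes means $Z_i$ is not literally a function of finitely many cluster labels — it depends on the full assignment vector $\bm{D}$, just weakly on distant coordinates. So the clean dependency-graph structure is only approximate. I would handle this by a coupling/truncation argument analogous to \autoref{Bbound} in \autoref{sproofs}: replace $Z_i$ by $\tilde Z_i$, its expectation conditional on the cluster labels of all clusters within a slowly growing radius (or simply all clusters intersecting a suitably enlarged neighborhood), so that $\tilde Z_i$ is genuinely a function of finitely many cluster labels, and control $\E\abs{Z_i - \tilde Z_i}$ using the summability $\sum_s s\,\psi(s) < \infty$. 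The error from this replacement must be shown to be $o_p(\sigma_n/\sqrt{m_n}) = o_p(n^{-1}\sqrt{m_n}\cdot m_n^{-1/2}\cdot\ldots)$ — more precisely $o(\sqrt{\var(\hat\theta)})$ up to the normalization — which should follow from the same kind of bound used to get $[B]\lesssim(nm_n)^{-1/2}$ in \autoref{tmse}, since that term already quantifies exactly the ``leakage'' of dependence across clusters.

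Once the approximating sum $m_n^{-1/2}\tilde S_n$ with $\tilde S_n = \sigma_n^{-1} n^{-1}\sqrt{m_n}\sum_k \tilde W_k$ is in hand, I would apply a CLT for sums of random variables with a bounded-degree dependency graph — e.g.\ the normal approximation bound of \citet[Theorem 2.7]{chen2004normal} or the classical result of \citet[]{baldi1989normal} — whose Berry–Esseen-type bound is of the form $C D^2 \sum_k \E\abs{\tilde W_k/(\sigma_n n/\sqrt{m_n})}^3$ with $D$ the maximum degree. Since $D=O(1)$, there are $m_n$ summands each of order $(n/m_n)\big/(\sigma_n n/\sqrt{m_n}) = m_n^{-1/2}/\sigma_n$, the third-moment sum is $O(m_n \cdot m_n^{-3/2}) = O(m_n^{-1/2}) \to 0$ under \autoref{anondeg}, giving the bound. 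The remaining technical checks are: (i) the conditional-independence claim — that $\{W_k : k\in A\}$ and $\{W_k : k\in B\}$ are independent when no edge of the dependency graph joins $A$ and $B$, which follows because disjoint sets of cluster labels are independent by \autoref{adesign} and $\tilde W_k$ is measurable w.r.t.\ a bounded label set; (ii) that $\var(m_n^{-1/2}\tilde S_n) \to 1$, i.e.\ the truncation does not change the variance asymptotically, again controlled by the $[B]$-type estimate together with \autoref{anondeg}; and (iii) Slutsky-combining the approximation error. The genuinely delicate piece is (ii) together with the leakage bound in the second paragraph — matching the truncation radius to the decay rate $\psi(\cdot)$ so that the discarded dependence is of strictly smaller order than $\sigma_n^2/m_n$, which is where the rate condition $\sum_s s\,\psi(s)<\infty$ and $m_n = o(n)$ both get used.
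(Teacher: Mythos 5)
Your proposal is correct and follows essentially the same route as the paper: the paper likewise decomposes $Z_i$ into $\E[Z_i\mid\F_i]$ (with $\F_i$ the treatment labels of $i$'s own cluster and the clusters meeting $Q(i,\kappa_n)$) plus a remainder controlled exactly by the $[B]$-type covariance bound, matches the variance via Minkowski's inequality, and then applies a Stein dependency-graph CLT. The only difference is cosmetic: you aggregate to cluster-level sums $\tilde W_k$ with a bounded-degree graph on $m_n$ nodes, while the paper keeps unit-level summands with a dependency graph of degree $O(n/m_n)$; both yield the same $O(m_n^{-1/2})$ Berry--Esseen rate.
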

\begin{proof}
  See \autoref{sproofs}.
\end{proof}

\noindent The result centers $\hat\theta$ at its expectation, not the estimand $\theta_n$. However, designs discussed in \autoref{sdex} result in small bias, meaning $\abs{\E[\hat\theta] - \theta_n} = o(m_n^{-1/2})$, so we can replace $\E[\hat\theta]$ with $\theta_n$ on the left-hand side. Also note that the assumption $m_n=o(n)$ implies that cluster sizes grow with $n$. If instead $m_n$ were of order $n$, then $r_n=O(1)$, so by \autoref{tsharp}, we would additionally need to assume that there exists a $K$-neighborhood exposure mapping in the sense of \eqref{cspec} in order to guarantee that the bias vanishes at all. In this case, it is straightforward to establish a normal approximation using existing results.

\subsection{Proof Sketch}\label{ssketch}

To our knowledge, there is no off-the-shelf central limit theorem that we can apply to $\hat\theta$. Under \autoref{aani}, the outcomes appear to be near-epoch dependent on the input process $\{D_i\}_{i\in\N_n}$, but the treatments are cluster-dependent with growing cluster sizes, rather than $\alpha$-mixing, as required by \cite{jenish2012spatial}. To prove a central limit theorem, they split the average into two parts: its expectation conditional on the dependent input process $\{D_i\}_{i\in\N_n}$, and a remainder that they show is small. Rather than conditioning on all treatments, we find that the following unit-specific conditioning event is more useful for proving our result.

Let $\C_i$ be the cluster containing unit $i$, and $\F_i = \{D_j\colon j \in \C_i \cup \N(i,\kappa_n)\}$. Rewrite the estimator as
\begin{equation}
  \hat\theta = \frac{1}{n} \sum_{i\in\N_n} Z_i = \frac{1}{n} \sum_{i\in\N_n} \E[Z_i \mid \F_i] + \frac{1}{n} \sum_{i\in\N_n} (Z_i - \E[Z_i \mid \F_i]). \label{decomp}
\end{equation}

\noindent We first show that the last term is relatively small, $o_p(m_n^{-1/2})$ to be precise, which means that, on average, $Z_i$ is primarily determined by $\F_i$. The proof of this claim is somewhat complicated \cite[and different from that of][]{jenish2012spatial}, but it is similar to the argument showing $[B] \lesssim (nm_n)^{-1/2}$ in \eqref{AB}. To then establish a central limit theorem for $n^{-1} \sum_{i\in\N_n} \E[Z_i \mid \F_i]$, we observe that the dependence between ``observations'' $\{\E[Z_i \mid \F_i]\}_{i\in\N_n}$ is characterized by the following dependency graph $\bm{A}$, which, roughly speaking, links two units only if they are dependent. Recalling the definition of $\Lambda_i$ from \eqref{Lambdai}, we connect units $i,j$ in $\bm{A}$ if and only if $j \in \Lambda_i$ (or equivalently $i \in \Lambda_j$). Then $\bm{A}$ is indeed a dependency graph because, under \autoref{adesign}, $j \not\in\Lambda_i$ implies that the treatment assignments that determine $\F_i$ are independent of those that determine $\F_j$. The result follows from a central limit theorem for dependency graphs.

The proof highlights two sources of dependence. The first-order source is the first term on the right-hand side of \eqref{decomp}. Dependence in this average is due to cluster randomization, which induces correlation in the treatments determining $\F_i$ across $i$. The second-order source is the second term on the right-hand side of \eqref{decomp}. Dependence in this average is due to interference, which decays with distance due to \autoref{aani}. S\"{a}vje \cite{savje2021causal} derives a similar decomposition in a different context with misspecified exposure mappings. The previous arguments show that the second-order source of dependence is small relative to the first-order source because, with large clusters, dependence induced by cluster randomization dominates dependence induced by interference. This is generally untrue with small clusters. 

\subsection{Variance Estimator}\label{svarest}

The proof sketch suggests that, to estimate $\sigma_n^2$, it suffices to account for dependence induced by cluster randomization. Define $A_{ij} = \bm{1}\{j \in \Lambda_i\}$, where $\Lambda_i$ is defined in \eqref{Lambdai}, and note that $A_{ii}=1$ and $A_{ij}=A_{ji}$. Let $\bar{Z} = n^{-1}\sum_{i\in\N_n} Z_i$, which is equivalent to $\hat\theta$. Our proposed variance estimator is
\begin{equation}
  \hat\sigma^2 = \frac{m_n}{n^2} \sum_{i\in\N_n} \sum_{j\in\N_n} (Z_i - \bar{Z}) (Z_j - \bar{Z}) A_{ij}. \label{hatsigma}
\end{equation}

\begin{theorem}\label{tvar}
  Suppose $m_n\rightarrow\infty$ and $m_n=o(n)$. Under Assumptions \ref{aboundY}--\ref{anondeg}, $(\hat\sigma^2-\mathcal{R}_n)/\sigma_n^2 \plimarrow 1$, where
  \begin{equation*}
    \mathcal{R}_n = \frac{m_n}{n} \hat{\tilde{\sigma}}^2 \quad\text{and}\quad \hat{\tilde{\sigma}}^2 = \frac{1}{n} \sum_{i\in\N_n} \sum_{j\in\N_n} (\E[Z_i] - \E[\bar{Z}]) (\E[Z_j] - \E[\bar{Z}]) A_{ij}.
  \end{equation*}
\end{theorem}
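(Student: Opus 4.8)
The plan is to reduce the claim to a law of large numbers for a quadratic form in the centered summands and then control that form via a truncation argument tailored to the cluster structure. Since \autoref{tmse} gives $\var(\hat\theta)\lesssim m_n^{-1}$, we have $\sigma_n^2\lesssim 1$, and $\liminf_n\sigma_n^2>0$ by \autoref{anondeg}, so $\sigma_n^2$ is bounded above and away from zero; it therefore suffices to show $\hat\sigma^2-\mathcal{R}_n-\sigma_n^2\plimarrow 0$. Write $\mu_i=\E[Z_i]$, $U_i=Z_i-\mu_i$, $\bar\mu=\E[\bar Z]=n^{-1}\sum_{i\in\N_n}\mu_i$, and $\bar U=\bar Z-\bar\mu$. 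Substituting $Z_i-\bar Z=(\mu_i-\bar\mu)+(U_i-\bar U)$ into \eqref{hatsigma} and using $A_{ij}=A_{ji}$, the mean--mean piece of the resulting expansion is exactly $\mathcal{R}_n$, so $\hat\sigma^2-\mathcal{R}_n=\mathcal{C}_n+V_n$, where $\mathcal{C}_n=\tfrac{2m_n}{n^2}\sum_{i,j\in\N_n}(\mu_i-\bar\mu)(U_j-\bar U)A_{ij}$ collects the cross terms and $V_n=\tfrac{m_n}{n^2}\sum_{i,j\in\N_n}(U_i-\bar U)(U_j-\bar U)A_{ij}$. The goal is to prove $\mathcal{C}_n=o_p(1)$ and $V_n=\sigma_n^2+o_p(1)$.

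The heart of the argument, and the step I expect to be the main obstacle, is a fourth-order covariance bound for the $U_i$. Recall $Z_i=f_i(\bm D)\,Y_i(\bm D)$ with $f_i=T_{1i}/p_{1i}-T_{0i}/p_{0i}$ a function of $\{D_j\colon j\in\N(i,\kappa_n)\}$ only, bounded by \autoref{roverlap}. For $s\ge\kappa_n$ I would introduce $\tilde Z_i^{(s)}=f_i(\bm D)\,Y_i(\bm D^{(i,s)})$, where $\bm D^{(i,s)}$ agrees with $\bm D$ on $\N(i,s)$ and is set to $\bm 1$ elsewhere, and its centered version $\tilde U_i^{(s)}$. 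By \autoref{aboundY}, \autoref{aani} and boundedness of $f_i$, $\norm{U_i-\tilde U_i^{(s)}}_\infty\lesssim\psi(s)$ and $\norm{\tilde U_i^{(s)}}_\infty\lesssim 1$, and $\tilde U_i^{(s)}$ is a function of the treatments of the clusters meeting $Q(i,s)$. Under \autoref{adesign} these cluster treatments are mutually independent, and since $A_{ij}=1$ forces $\rho(i,j)=O(r_n)$, a single cluster can meet both $Q(i,s)\cup Q(j,s)$ and $Q(k,s)\cup Q(l,s)$ only when $\rho(i,k)\lesssim s+r_n$; hence $\tilde U_i^{(s)}\tilde U_j^{(s)}$ and $\tilde U_k^{(s)}\tilde U_l^{(s)}$ become independent once $\rho(i,k)$ exceeds a fixed multiple of $s+r_n$. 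Expanding $U_iU_j$ around $\tilde U_i^{(s)}\tilde U_j^{(s)}$ (and likewise $U_kU_l$), every term other than $\cov(\tilde U_i^{(s)}\tilde U_j^{(s)},\tilde U_k^{(s)}\tilde U_l^{(s)})$ carries a discrepancy factor bounded by $\psi(s)$, while that term vanishes for distant quadruples; using $\abs{\cov(X,Y)}\le 2\norm{X}_\infty\norm{Y}_\infty$ then yields, for $j\in\Lambda_i$, $l\in\Lambda_k$, and $s$ up to order $\rho(i,k)-r_n$, the bound $\abs{\cov(U_iU_j,U_kU_l)}\lesssim\psi(s)$, together with the trivial bound $\abs{\cov(U_iU_j,U_kU_l)}\lesssim 1$. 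Optimizing $s$ of order $\rho(i,k)$ for distant quadruples and combining with the packing estimate $\abs{\{k\colon\rho(i,k)\le t\}}\lesssim t^2$ from \autoref{aid} (so that at most $O(t)$ units lie at distance about $t$) and $\sum_s s\,\psi(s)<\infty$ (which also forces $\sum_s\psi(s)<\infty$), I get $\sum_{k\in\N_n}\max_{j\in\Lambda_i,\,l\in\Lambda_k}\abs{\cov(U_iU_j,U_kU_l)}\lesssim r_n^2+r_n\lesssim r_n^2$, uniformly in $i$: the $r_n^2$ counts the $O(r_n^2)$ nearby $k$'s contributing $O(1)$ each, and the distant contribution is only $O(r_n)$ because $\sum_s(s+r_n)\psi(s)\lesssim 1+r_n$. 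Since $\abs{\Lambda_i}\lesssim n/m_n$ (Lemma A.1 of \cite{jenish2009central}) and $r_n^2=n/m_n+o(1)$, this bounds $\sum_{i,j,k,l}\ind\{A_{ij}=A_{kl}=1\}\,\abs{\cov(U_iU_j,U_kU_l)}$ by a constant times $n^4/m_n^3$.

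Granting this, the rest is routine. Set $W_n=\tfrac{m_n}{n^2}\sum_{i,j\in\N_n}U_iU_jA_{ij}$. Then $\E[W_n]=\tfrac{m_n}{n^2}\sum_{i,j}\cov(Z_i,Z_j)A_{ij}$, which equals $\sigma_n^2-m_n[B]$ in the notation of \eqref{AB}; since \autoref{Bbound} (invoked inside \autoref{tmse}) gives $m_n[B]\lesssim(m_n/n)^{1/2}\to 0$, we obtain $\E[W_n]=\sigma_n^2+o(1)$. Also $\var(W_n)=\tfrac{m_n^2}{n^4}\sum_{i,j,k,l}A_{ij}A_{kl}\cov(U_iU_j,U_kU_l)$ is, by the previous paragraph, at most a constant times $\tfrac{m_n^2}{n^4}\cdot\tfrac{n^4}{m_n^3}=m_n^{-1}\to 0$, so $W_n\plimarrow\sigma_n^2$ by Chebyshev. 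The $\bar Z$-centering corrections are dispatched by the same estimates at second order: repeating the truncation/packing argument for pairs rather than quadruples gives $\sum_{i,k}\abs{\cov(U_i,U_k)}\lesssim n\,r_n^2\lesssim n^2/m_n$, from which $\var(\bar U)=\var(\hat\theta)\lesssim m_n^{-1}$ and any linear form $\sum_i b_iU_i$ with $\max_i\abs{b_i}\lesssim 1/n$ (in particular $\tfrac{m_n}{n^2}\sum_i d_iU_i$ with $d_i=\sum_jA_{ij}\lesssim n/m_n$, and the coefficient array $\tfrac{m_n}{n^2}\sum_j(\mu_j-\bar\mu)A_{ij}$) has variance $\lesssim m_n^{-1}$. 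Since $\tfrac{m_n}{n^2}\sum_{i,j}A_{ij}\lesssim 1$ and $\tfrac{m_n}{n^2}\sum_{i,j}\abs{\mu_i-\bar\mu}A_{ij}\lesssim 1$, expanding $V_n-W_n$ and $\mathcal{C}_n$ in $\bar U$ and these mean-zero linear forms gives $V_n-W_n=O_p(m_n^{-1})$ and $\mathcal{C}_n=O_p(m_n^{-1/2})$, both $o_p(1)$. Hence $\hat\sigma^2-\mathcal{R}_n=W_n+o_p(1)=\sigma_n^2+o_p(1)$, which is the claim.

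As noted, the delicate point is the fourth-order covariance inequality, and within it the interplay between the growing exposure radius $\kappa_n$ and the cluster radius $r_n$ (both of the same order): the truncation level $s$ must be large enough—of order $\rho(i,k)$—for distant quadruples to decouple across clusters, yet the resulting sum $\sum_k\psi(s)$ over not-too-distant units $k$ must still converge, and balancing these is precisely where the rate restriction $\sum_s s\,\psi(s)<\infty$ in \autoref{aani} is essential (and why the $r_n\sum_s\psi(s)$ term, harmless here because dominated by the $O(r_n^2)$ near-field count, must be tracked). A secondary wrinkle is that \autoref{Bbound} may need to be used, or its proof lightly adapted, in absolute-value form to bound $\sum_{i,k}\abs{\cov(U_i,U_k)}$; this, however, is an immediate byproduct of the second-order version of the covariance inequality above.
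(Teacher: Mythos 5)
Your proposal is correct: the decomposition $\hat\sigma^2-\mathcal{R}_n=\mathcal{C}_n+V_n$ isolates the bias term exactly, the fourth-order covariance bound $\sum_{i,j,k,l}\ind\{A_{ij}=A_{kl}=1\}\abs{\cov(U_iU_j,U_kU_l)}\lesssim n^4/m_n^3$ goes through as you describe (truncation at radius $s$, decoupling once $\rho(i,k)$ exceeds a multiple of $s+r_n$ by independence of cluster assignments, the packing bounds from Lemma A.1 of \cite{jenish2009central}, and $\sum_s s\,\psi(s)<\infty$ to tame the far field), and the identification $\E[W_n]=\sigma_n^2-m_n[B]=\sigma_n^2+o(1)$ together with the linear-form bounds correctly dispatches the centering corrections. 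This is essentially the same machinery the paper deploys in \autoref{Bbound} and the proof of \autoref{tclt} --- approximate $Z_i$ by a function of nearby cluster assignments, exploit exact independence of distant clusters, and sum covariances over contour sets --- extended from second to fourth moments; your only cosmetic deviation is truncating by plugging in a fixed assignment outside $\N(i,s)$ rather than conditioning on $\F_i(s)$, which changes nothing.
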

\begin{proof}
  See supplemental material \citep{leung2022supplement}.
\end{proof}

\noindent The bias term $\mathcal{R}_n$ is typically nonzero due to the unit-level heterogeneity. That is, $\abs{\E[Z_i] - \E[\bar{Z}]}$ does not approach zero asymptotically, except in the special case of homogeneous treatment effects where $Y_i(\bm{1}_n)-Y_i(\bm{0}_n)$ does not vary across $i$. In the no-interference setting, it is well-known that the variance of the difference-in-means estimator is biased for the same reason and that consistent estimation of the variance is impossible. However, due to the term $m_n/n = o(1)$ in $\mathcal{R}_n$, we will argue that typically $\mathcal{R}_n = o_p(1)$, meaning that $\hat\sigma^2$ is asymptotically exact.

Let us first compare $\mathcal{R}_n$ to its formulation under no interference. In this case, $Y_i(\bm{D}) = Y_i(D_i)$, and we replace $T_{1i}$ with $D_i$ and $T_{0i}$ with $1-D_i$ to estimate the usual average treatment effect. Furthermore, we set $A_{ij}=0$ for all $i\neq j$ because units are independent and set $m_n=n$ since there is no longer a need to cluster units. With these changes, $Z_i = (D_i/p - (1-D_i)/(1-p))Y_i$, and 
\begin{equation}
  \mathcal{R}_n = \frac{1}{n} \sum_{i\in\N_n} (\tau_i - \bar{\tau})^2 \label{nointer}
\end{equation}

\noindent for $\tau_i = Y_i(1)-Y_i(0)$ and $\bar{\tau} = n^{-1} \sum_{i\in\N_n} \tau_i$. This is the well-known expression for the bias in the absence of interference \citep[e.g.][Theorem 6.2]{imbens_causal_2015}. 

In our setting, we have additional ``covariance'' terms included in $\mathcal{R}_n$ due to the non-zero off-diagonals of the dependency graph $A_{ij}$. These would be problematic if they were negative and larger in magnitude than the main variance terms since that would make $\hat\sigma^2$ anti-conservative. We show that this occurs with small probability, and in fact, that $\mathcal{R}_n$ is $o_p(1)$. Observe that $m_n/n = o(1)$ and $\hat{\tilde{\sigma}}^2$ has the form of a HAC (heteroskedasticity and autocorrelation consistent) variance estimator \citep{andrews1991heteroskedasticity,conley1999gmm}. Hence, under conventional regularity conditions, $\hat{\tilde{\sigma}}^2$ is consistent for a variance term $\tilde\sigma^2 \geq 0$, in which case $\mathcal{R}_n$ is non-negative in large samples, and furthermore, $o_p(1)$. To formalize this intuition, we need to specify conditions on the superpopulation from which potential outcomes are drawn. In \autoref{sbias}, we show that, if potential outcomes are $\alpha$-mixing, then $\hat{\tilde{\sigma}}^2$ is asymptotically unbiased for $\tilde{\sigma}^2 = \var(n^{-1/2} \sum_{i=1}^n \E[Z_i \mid \{Y_i(\bm{d})\}_{\bm{d}\in\{0,1\}^n}])$, and furthermore, $\var(\hat{\tilde{\sigma}}^2) = O(n^2/m_n^3)$. Consequently, $\var(\mathcal{R}_n) = O(m_n^{-1})$ due to the $m_n/n$ term in its expression.

\begin{remark}[Confidence interval]
  As previously discussed, the bias of $\hat\theta_n$ is $o(m_n^{-1/2})$ for the near-optimal designs discussed at the end of \autoref{srmse}. Thus, for such designs, the preceding discussion justifies the use of 
  \begin{equation}
    \hat\theta \pm 1.96 \cdot \hat\sigma m_n^{-1/2} \label{CI}
  \end{equation} 

  \noindent as an asymptotic 95-percent CI for $\theta_n$, where $\hat\sigma^2$ is defined in \eqref{hatsigma}.
\end{remark}

\begin{remark}[Literature]\label{moreleung}
  Leung \cite{leung2021causal} proves a result similar to \autoref{tvar} but for a different variance estimator under a different design and variety of interference. Due to the lack of an analogous $m_n/n$ term, in his setting, weak dependence conditions would only ensure $\mathcal{R}_n \plimarrow \tilde\sigma^2 \geq 0$, in which case the estimator would be asymptotically conservative, whereas ours is asymptotically exact. He does not provide a formal result on the limit of $\mathcal{R}_n$.
\end{remark}

\section{Monte Carlo}\label{smc}

We next present results from a simulation study illustrating the quality of the normal approximation in \autoref{tclt} and coverage of the CI \eqref{CI} when constructing clusters using spectral clustering. To generate spatial locations, let $\{\tilde\rho_i\}_{i\in\N_n}$ be i.i.d.\ draws from $\mathcal{U}([-1,1]^2)$. Unit locations in $\R^2$ are given by $\{\rho_i\}_{i\in\N_n}$ for $\rho_i = R_n\tilde\rho_i$ with $R_n=\sqrt{n}$. We let $\rho(i,j) = \norm{\rho_i - \rho_j}$ where $\norm{\cdot}$ is the Euclidean norm.

We set the number of clusters according to \eqref{dcc}, rounded to the nearest integer, which corresponds to the near-optimal design under the worst-case decay discussed in \autoref{sdex}. To construct clusters, we apply spectral clustering to $\{\rho_i\}_{i\in\N_n}$ with the standard Gaussian affinity matrix whose $ij$th entry is $\text{exp}\{-\rho(i,j)^2\}$. Clusters are randomized into treatment with probability $p=0.5$. \autoref{speclusts} displays the clusters and treatment assignments for a typical simulation draw.

\begin{figure}
  \centering
  \includegraphics[scale=0.4]{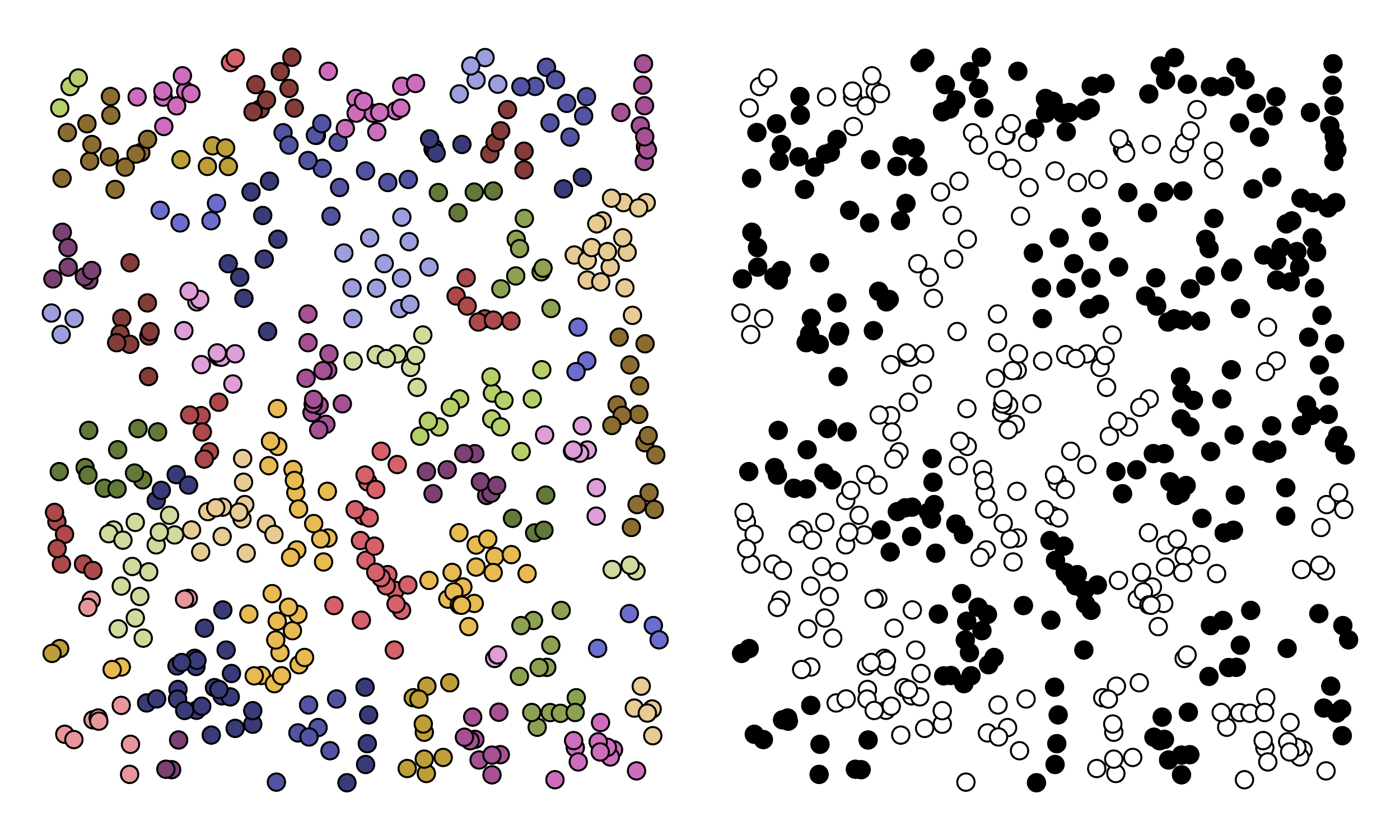}
  \caption{The left figure colors units by cluster memberships obtained from spectral clustering (some colors are reused for different clusters). The right figure colors units by treatment assignment.}
  \label{speclusts}
\end{figure}

We generate outcomes from three different models. Let $\{\varepsilon_i\}_{i\in\N_n} \stackrel{iid}\sim \mathcal{N}(0,1)$ be drawn independently of the other primitives. The first model is Cliff-Ord:
\begin{equation*}
  Y_i = \alpha + \lambda \sum_{j\in\mathcal{N}_n} W_{ij}Y_j + \delta \sum_{j\in\mathcal{N}_n} W_{ij}D_j + D_i\beta + \varepsilon_i
\end{equation*}

\noindent with $(\alpha,\lambda,\delta,\beta) = (-1,0.8,1,1)$ and spatial weight matrix given by the row-normalized adjacency matrix \eqref{Wadj}. As discussed in \autoref{srmse}, this model features exponentially decaying $\psi(s)$, in fact of order $\lambda^s$ \citep[][Proposition 1]{leung2021causal}.  

We construct the second and third models to explore how our methods break down when \autoref{aani} is violated or close to violated. For this purpose, we use the ``moving average'' model \eqref{clifford} with $(\alpha,\beta) = (-1,1)$ and $V_{ij} = \rho(i,j)^{-\eta}$ for $\eta=4,5$ for the two respective models, so that $\psi(s)$ decays at a polynomial rate. Notably, the choice of $\eta=4$ implies that the rate of decay is slow enough that \autoref{aani} can fail to hold. This is because
\begin{equation*}
  \eqref{coani} = \sum_{s=1}^\infty \sum_{j\in\N_n} \bm{1}\{\rho(i,j)\in [s-1,s)\} \rho(i,j)^{\gamma-4} \leq c \sum_{s=0}^\infty s^{\gamma-3}
\end{equation*}

\noindent for some $c>0$ by Lemma A.1(iii) of \cite{jenish2009central}. The right-hand side does not converge for some $\gamma>2$, as required by \autoref{pclifford}. On the other hand, the choice of $\eta=5$ is large enough for \autoref{aani} to be satisfied since we now replace the 3 on the right-hand side of the previous display with 4. However, in smaller samples, $\eta=4$ or 5 may not be substantially different, so our methods may still break down from the assumption being ``close to'' violated.

\autoref{tsims} displays the results of 5000 simulation draws. Row ``Bias$(\hat\theta)$'' displays $\abs{\E[\hat\theta - \theta_n]}$, estimated by taking the average over the draws, while ``Var$(\hat\theta)$'' is the variance of $\hat\theta$ across the draws. The next rows display the coverage of three different confidence intervals. ``Our CI'' corresponds to the empirical coverage of \eqref{CI}. ``Naive CI'' corresponds to \eqref{CI} but replaces $\hat\sigma m_n^{-1/2}$ with the i.i.d.\ standard error, so the extent to which its coverage deviates from 95 percent illustrates the degree of spatial dependence. ``Oracle CI'' corresponds to \eqref{CI} but replaces $\hat\sigma m_n^{-1/2}$ with the ``oracle'' SE, which is the standard deviation of $\hat\theta$ across the draws. Note that the oracle SE approximates $\sigma_n^2 + \mathcal{R}_n$ because the variance is taken over the randomness of the design as well as of the potential outcomes. Lastly, ``SE'' displays our standard error $\hat\sigma m_n^{-1/2}$. 

\begin{table}[ht]
\centering
\caption{Simulation Results}
\begin{threeparttable}
\begin{tabular}{lrrrrrrrrr}
\toprule
{} & \multicolumn{3}{c}{Moving Avg, $\eta=4$} & \multicolumn{3}{c}{Moving Avg, $\eta=5$} & \multicolumn{3}{c}{Cliff-Ord} \\
$n$ &                       250  &   500  &    1000 &                       250  &   500  &    1000 &      250  &   500  &    1000 \\
\midrule
$m_n$      &                     40 & 63 & 100 &  40 & 63 & 100 &  40 & 63 & 100 \\
Our CI   &                  0.909 &  0.925 &   0.924 &                  0.922 &  0.937 &   0.940 &     0.979 &  0.983 &   0.982 \\
Naive CI  &                  0.530 &  0.489 &   0.447 &                  0.575 &  0.539 &   0.494 &     0.918 &  0.913 &   0.916 \\
Oracle CI &                  0.943 &  0.943 &   0.936 &                  0.950 &  0.952 &   0.949 &     0.983 &  0.982 &   0.982 \\
Bias$(\hat\theta)$       &                  0.108 &  0.093 &   0.083 &                  0.033 &  0.027 &   0.024 &     0.009 &  0.004 &   0.003 \\
Var$(\hat\theta)$   &                  0.143 &  0.087 &   0.054 &                  0.108 &  0.065 &   0.039 &     0.341 &  0.177 &   0.087 \\
SE         &                  0.364 &  0.292 &   0.232 &                  0.317 &  0.252 &   0.199 &     0.601 &  0.432 &   0.309 \\
$\hat\theta$   &                  1.432 &  1.467 &   1.492 &                  1.289 &  1.307 &   1.319 &     5.804 &  5.822 &   5.851 \\
\bottomrule
\end{tabular}
\begin{tablenotes}[para,flushleft]
  \footnotesize 5k simulations. The ``CI'' rows show the empirical coverage of 95\% CIs. ``Naive'' and ``Oracle'' respectively correspond to i.i.d.\ and true standard errors.
\end{tablenotes}
\end{threeparttable}
\label{tsims}
\end{table}

There are at most 100 clusters in all designs, and the rate of convergence is quite slow at $n^{-1/3}$ for our choice of $m_n$. Nonetheless, across all designs, the coverage of the oracle CI is close to 95 percent or above, which illustrates the quality of the normal approximation. For the Cliff-Ord model, our CI attains at least 95 percent coverage even for small sample sizes, despite $m_n$ being chosen suboptimally for the worst-case decay. For the moving average model with $\eta=5$, we see some under-coverage in smaller samples due to the larger bias, which is unsurprising from the above discussion, but coverage is close to the nominal level for larger $n$. The results for $\eta=4$, as expected, are worse since it is deliberately constructed to violate our main assumption. Once again, our CI exhibits under-coverage due to the larger bias, but coverage improves and bias decreases as $n$ grows. 

\section{Conclusion}\label{sconclude}

This paper studies the design of cluster-randomized experiments targeting the global average treatment effect under spatial interference. Each design is characterized by a parameter $m_n$ that determines the number and sizes of clusters. We propose a Horvitz-Thompson estimator that compares units with different neighborhood exposures to treatment, where the neighborhood radius is of the same order as clusters' sizes given by the design. We asymptotically bound the estimator's bias and variance as a function of $m_n$ and the degree of interference and derive rate-optimal choices of $m_n$. Our lower asymptotic bound shows that designs using small clusters (those with asymptotically bounded sizes) generally result in a non-negligible asymptotic bias. On the other hand, constructing large clusters reduces the total number of clusters, resulting in a bias-variance trade-off that we seek to optimize in terms of rates through the choice of design. 

In the worst case where the degree of interference is substantial, the estimator has an $n^{-1/3}$-rate of convergence under a nearly rate-optimal design, whereas in the best case where interference is characterized by a $K$-neighborhood exposure mapping, the rate is $n^{-1/2}$ under a rate-optimal design. We derive the asymptotic distribution of the estimator and provide an estimate of the variance. 

Important areas for future research include data-driven choices of $m_n$ and $\kappa_n$ and methods to reduce the bias of the estimator. However, a rigorous theory appears to require more substantive restrictions on interference than what we impose. 

Our results focus on the canonical case of spatial data in $\R^2$. We conjecture that they can be extended to $\R^d$ for $d>2$ because our proofs fundamentally rely on the following key property of Euclidean space, which is true for any dimension: it is always possible to construct many clusters with low {\em conductance}, or boundary-to-volume ratio, for example by partitioning space into hypercubes or by spectral clustering \citep{leung2021network}. This appears in our proofs through the use of Lemma A.1 of \cite{jenish2009central}, which, together with \autoref{aani}, is crucial to establish that spatially distant units have small covariance, despite dependence induced by cluster randomization and interference. In this sense, the technical idea behind this paper is to exploit a useful property of Euclidean space -- the existence of many low-conductance clusters -- to show that cluster-randomized designs may be fruitfully applied to the problem of spatial interference.

The story for network interference appears to be different. Existing cluster-randomized designs have theoretical guarantees under exposure mapping assumptions, but it is an open question whether such designs work under weaker restrictions on interference such as \autoref{aani}. In order to directly apply our idea in the previous paragraph, the network must possess many low-conductance clusters across which we can randomize. Unfortunately, this is a strong requirement in practice because, as discussed in \cite{leung2021network}, not only do some networks not possess multiple low-conductance clusters, but, of those that do, some apparently possess only a small number of such clusters. Because network ``space'' differs from Euclidean space in this fundamental aspect, under network interference, clusters can be strongly dependent in the absence of exposure mapping assumptions.


\begin{appendix}
\numberwithin{equation}{section}

\section{Bias of the Variance Estimator}\label{sbias}

Characterizing the asymptotic behavior of $\mathcal{R}_n$ requires conditions on the superpopulation from which units are drawn. In this section, we assume potential outcomes are random and constitute a weakly dependent spatial process (independent of $\bm{D}$). Accordingly, we rewrite $\E[Z_i]$ in \autoref{tvar} as
\begin{equation*}
  \tilde{Z}_i \equiv \E[Z_i \mid \{Y_i(\bm{d})\}_{\bm{d}\in\{0,1\}^n}].
\end{equation*}

\noindent We require the spatial process $\{\tilde{Z}_i\}_{i\in\N_n}$ to be $\alpha$-mixing, which is a standard concept of weak spatial dependence. The results we use in fact apply to the weaker concept of near-epoch dependence, but we focus on mixing since it requires less exposition.

\begin{secdefinition}
  Let $(\Omega, \F, \prob)$ be the probability space, $\mathcal{A},\mathcal{B}$ be sub-$\sigma$-algebras of $\F$, and
  \begin{equation*}
    \alpha(\mathcal{A},\mathcal{B}) = \sup\{ \abs{\prob(A \cap B) - \prob(A) \prob(B)}; A \in \mathcal{A}, B \in \mathcal{B} \}.
  \end{equation*}

  \noindent For $U,V \subseteq \N_n$, let $\alpha_n(U,V) = \alpha(\sigma_n(U), \sigma_n(V))$, where $\sigma_n(U)$ is $\sigma$-algebra generated by $\{\tilde{Z}_i\}_{i\in U}$. The {\em $\alpha$-mixing coefficient} of $\{\tilde{Z}_i\}_{i\in\N_n}$ is
  \begin{equation*}
    \bar{\alpha}(u,v,r) = \sup_n \sup_{U,V} \{\alpha_n(U,V); \abs{U}\leq u, \abs{V}\leq v, \rho(U,V) \geq r\} 
  \end{equation*}

  \noindent for $u,v\in\mathbb{N}$, $r\in\R_+$, and $\rho(U,V) = \min\{\rho(i,j)\colon i \in U, j \in V\}$.
\end{secdefinition}

\noindent That is, for any two sets of units $U,V \subseteq \N_n$ with respective sizes $u,v$ such that the minimum distance between $U,V$ is at least $r$, $\bar{\alpha}(u,v,r)$ bounds their dependence with respect to observations $\{\tilde{Z}_i\}_{i\in\N_n}$. 

\begin{secexample}
  Suppose that, for any $n$ and $i\in\N_n$, there exists a function $f(\cdot)$ such that $Y_i(\bm{d}) = f(\varepsilon_i, \bm{d})$. If the unobserved heterogeneity $\{\varepsilon_i\}_{i\in\N_n}$ is $\alpha$-mixing and $\tilde{Z}_i$ is a Borel-measurable function of $\varepsilon_i$ (a mild requirement since treatments are independent of potential outcomes), then $\{\tilde{Z}_i\}_{i\in\N_n}$ is $\alpha$-mixing.
\end{secexample}

\begin{secexample}
  Generalizing the previous example, suppose $Y_i(\bm{d}) = f(d_i, d_{-i}, \varepsilon_i, \varepsilon_{-i})$, where $\bm{d} = (d_i, d_{-i})$ and $\varepsilon_{-i}$ is similarly defined. Under some conditions on $f(\cdot)$, one can ensure that $\{\tilde{Z}_i\}_{i\in\N_n}$ is near-epoch dependent on the input $\{\varepsilon_i\}_{i\in\N_n}$ \citep[e.g.][Proposition 1]{jenish2012spatial}. However, we only focus on mixing.
\end{secexample}

\noindent We next discuss the intuition behind our main result. Let $\bar{\tilde{Z}} = n^{-1} \sum_{i=1}^n \tilde{Z}_i$, so that 
\begin{equation*}
  \mathcal{R}_n = \frac{m_n}{n} \hat{\tilde{\sigma}}^2, \quad\text{where}\quad \hat{\tilde{\sigma}}^2 = \frac{1}{n} \sum_{i\in\N_n} \sum_{j\in\N_n} (\tilde{Z}_i - \E[\bar{\tilde{Z}}]) (\tilde{Z}_i - \E[\bar{\tilde{Z}}]) A_{ij}.
\end{equation*}

\noindent Observe that $m_n/n=o(1)$, and $\hat{\tilde{\sigma}}^2$ is essentially a HAC variance estimator with ``kernel'' $A_{ij}$. More precisely, $A_{ij}$ is sandwiched between two uniform kernels:
\begin{equation}
  \bm{1}\{\rho(i,j) \leq \kappa_n\} \leq A_{ij} \leq \bm{1}\{\rho(i,j) \leq 2r_n+\kappa_n\}. \label{Asw}
\end{equation}

\noindent This is a consequence of the construction of clusters. The lower bound holds because $\Lambda_i$ must include $i$'s $\kappa_n$-neighborhood. The upper bound is achieved if $i$ is located at a corner shared by four clusters, and each such cluster intersects some $\kappa_n$-neighborhood that is maximally distant from the cluster. Notably, the bandwidths of the two kernels are of the same asymptotic order (recalling that $\kappa_n$ has the same order as $r_n$). Hence, $\hat{\tilde{\sigma}}^2$ should behave as a HAC variance estimator. This has two implications. 
\begin{enumerate}
  \item $\hat{\tilde{\sigma}}^2$ should be consistent for a non-negative variance term under standard regularity conditions, so $\mathcal{R}_n = o_p(1)$. 

  \item If, in the formula for $\hat\sigma^2$, we replace $A_{ij}$ with the uniform kernel in the upper bound \eqref{Asw}, then we would have a positive semidefinite HAC estimator, implying $\mathcal{R}_n \geq 0$ a.s.\ \citep[][\S3.3.1]{andrews1991heteroskedasticity,conley1999gmm}. Then in the finite-population setting of our main results, $\hat\sigma^2$ would be at worst asymptotically conservative.

    We choose not use a spatial HAC for $\hat\sigma^2$ because they have a reputation for being anti-conservative in smaller samples \citep[see references in e.g.][]{leung2021network}. In our estimator, $A_{ij}$ functions as a sort of heterogeneous bandwidth determined by the design that allows different units to have different neighborhood radii in the variance estimator, whereas HAC kernels imply homogeneous radii determined by the bandwidth. The hope is that heterogeneous radii could translate to better finite-sample properties since they directly capture the first-order dependency neighborhood.
\end{enumerate}

We next state regularity conditions taken from \cite{jenish2016spatial}, which we use to apply her Theorem 4 on consistency of HAC estimators.

\begin{secassump}\label{amix}
  The mixing coefficient satisfies $\bar{\alpha}(u,v,r) \leq (u+v)^\varsigma \hat{\alpha}(r)$ for $\varsigma \geq 0$ and $\sum_{r=1}^\infty r^{2(\varsigma+1)-1} \hat\alpha(r) < \infty$.
\end{secassump}

\noindent This is Assumption 2 of \cite{jenish2016spatial}. The substance of the condition is the requirement that the mixing coefficient decays at a sufficiently fast rate with distance $r$. For $\varsigma > 0$, the rate requirement is stronger than what we require of $\psi(\cdot)$ in \autoref{aani}.

\begin{secassump}\label{amoms}
  (a) $\E[\tilde{Z}_i]=\E[\tilde{Z}_j]$ for all $i,j$. (b) $\var(n^{-1/2} \sum_{i=1}^n \tilde{Z}_i) \plimarrow \tilde\sigma^2 < \infty$.
\end{secassump}

\noindent This is Assumption 7(a) of \cite{jenish2016spatial}. Part (a) is a standard mean-homogeneity condition required for HAC to be asymptotically unbiased. Such a requirement is untenable in the finite population setting of \autoref{tvar} because $\tilde{Z}_i$ is a function of $i$'s potential outcomes which are generally heterogeneous across units. Heterogeneity is responsible for the appearance of the bias $\mathcal{R}_n$. However, in the superpopulation setting of this section, the assumption is much more tenable since we integrate over the randomness of the potential outcomes. The condition then requires that the mean be invariant to unit labels. The finiteness requirement in part (b) can be proven as a consequence of \autoref{amix} and moment conditions, so the only substance of the assumption is the existence of a limit.

\begin{sectheorem}\label{aR_n}
  Under Assumptions \ref{aboundY}, \ref{aid}, \ref{amix}, and \ref{amoms}, if $m_n \rightarrow \infty$, then (a) $\E[\hat{\tilde{\sigma}}^2] \rightarrow \tilde\sigma^2$, and (b) $\var(\hat{\tilde{\sigma}}) = O(n^2 / m_n^3)$.
\end{sectheorem}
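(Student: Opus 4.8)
The plan is to recognize $\hat{\tilde{\sigma}}^2$ as a kernel-sandwiched spatial HAC variance estimator with bandwidth of the order of $r_n$, and to derive both claims from the $\alpha$-mixing covariance inequality for bounded random variables together with the point-counting bounds of Lemma A.1 of \cite{jenish2009central}; an alternative is to verify directly that $\hat{\tilde{\sigma}}^2$ meets the hypotheses of Theorem 4 of \cite{jenish2016spatial} and invoke it. Two preliminary observations simplify matters. First, the ``kernel'' $A_{ij}$ is deterministic (it depends only on the fixed locations and the partition into clusters) and, by \eqref{Asw}, is pinched between the uniform kernels $\bm{1}\{\rho(i,j)\leq\kappa_n\}$ and $\bm{1}\{\rho(i,j)\leq 2r_n+\kappa_n\}$, whose bandwidths are both of order $r_n$. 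Second, \autoref{amoms}(a) lets me recenter at the common mean $\mu\equiv\E[\tilde{Z}_1]$, so that $\hat{\tilde{\sigma}}^2 = n^{-1}\sum_{i,j\in\N_n}A_{ij}\bar{Z}_i\bar{Z}_j$ with $\bar{Z}_i\equiv\tilde{Z}_i-\mu$, which is uniformly bounded by \autoref{aboundY} (since $\tilde Z_i=\E[Z_i\mid\{Y_i(\bm d)\}_{\bm d}]$ and $Z_i$ is bounded). Hence the mixing covariance inequality applies in the form $\abs{\cov(X,Y)}\lesssim\bar\alpha(u,v,r)$ whenever $X,Y$ are bounded and $\sigma_n$-measurable with respect to index sets of sizes $u,v$ separated by distance $r$, and \autoref{amix} bounds this by a constant multiple of $(u+v)^\varsigma\hat\alpha(r)$.

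For part (a), taking expectations gives $\E[\hat{\tilde{\sigma}}^2]=n^{-1}\sum_{i,j\in\N_n}A_{ij}\cov(\tilde{Z}_i,\tilde{Z}_j)$, which I compare to $\tilde\sigma_n^2\equiv\var(n^{-1/2}\sum_{i\in\N_n}\tilde{Z}_i)=n^{-1}\sum_{i,j\in\N_n}\cov(\tilde{Z}_i,\tilde{Z}_j)$; the latter tends to $\tilde\sigma^2$ by \autoref{amoms}(b). Using the lower bound in \eqref{Asw}, so that $1-A_{ij}\leq\bm{1}\{\rho(i,j)>\kappa_n\}$,
\begin{equation*}
  \bigl\lvert\E[\hat{\tilde{\sigma}}^2]-\tilde\sigma_n^2\bigr\rvert \;\leq\; \frac{1}{n}\sum_{i\in\N_n}\sum_{j\in\N_n:\,\rho(i,j)>\kappa_n}\abs{\cov(\tilde{Z}_i,\tilde{Z}_j)} \;\lesssim\; \sum_{s>\kappa_n}s\,\hat\alpha(s),
\end{equation*}
where the last step bounds $\abs{\cov(\tilde{Z}_i,\tilde{Z}_j)}$ by a constant multiple of $\hat\alpha(\rho(i,j))$ (mixing inequality and \autoref{amix}) and counts the units in each distance-$s$ shell as $O(s)$ via Lemma A.1 of \cite{jenish2009central}. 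Since $\sum_{s\geq1}s\,\hat\alpha(s)<\infty$ (the $\varsigma=0$ case of \autoref{amix}) and $\kappa_n=r_n/2\rightarrow\infty$, the right side is the tail of a convergent series and vanishes, so $\E[\hat{\tilde{\sigma}}^2]\rightarrow\tilde\sigma^2$.

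For part (b), expand
\begin{equation*}
  \var(\hat{\tilde{\sigma}}^2) = \frac{1}{n^2}\sum_{i,j,k,l\in\N_n}A_{ij}A_{kl}\,\cov\bigl(\bar{Z}_i\bar{Z}_j,\,\bar{Z}_k\bar{Z}_l\bigr),
\end{equation*}
and bound each summand by a constant multiple of $\hat\alpha\bigl(\rho(\{i,j\},\{k,l\})\bigr)$ by applying the mixing inequality to the products $\bar{Z}_i\bar{Z}_j$ and $\bar{Z}_k\bar{Z}_l$, each of which is bounded and measurable with respect to a \emph{two}-site index set, which keeps the size factor $(u+v)^\varsigma$ in \autoref{amix} a harmless constant. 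By \eqref{Asw}, $A_{ij}\neq0$ forces $\rho(i,j)\lesssim r_n$. The contribution of $4$-tuples with $\rho(\{i,j\},\{k,l\})\lesssim r_n$ is then at most a constant times $n^{-2}\cdot n\cdot(r_n^2)^3=r_n^6/n$, after counting $i$ in $O(n)$ ways and $j,k,l$ each within $O(r_n)$ of $i$, i.e.\ $O(r_n^2)$ ways, using Lemma A.1 of \cite{jenish2009central}. The contribution of $4$-tuples with $\rho(\{i,j\},\{k,l\})=s$ for $s\gtrsim r_n$ is at most a constant times $n^{-2}\cdot n\cdot r_n^2\cdot r_n^2\cdot\sum_{s\geq1}s\,\hat\alpha(s)\lesssim r_n^4/n$, since for fixed $\{i,j\}$ the site of $\{k,l\}$ realizing the distance $s$ lies in a distance-$s$ shell around $i$ or $j$ ($O(s)$ choices), its partner lies within $O(r_n)$ of it ($O(r_n^2)$ choices), and $\sum_{s\geq1}s\,\hat\alpha(s)<\infty$. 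Hence $\var(\hat{\tilde{\sigma}}^2)\lesssim r_n^6/n$, and since $r_n^2$ is of order $n/m_n$ this is $O\bigl((n/m_n)^3/n\bigr)=O(n^2/m_n^3)$, as claimed.

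The main obstacle is the combinatorics in part (b): one must apply the mixing inequality to the two-site products $\bar{Z}_i\bar{Z}_j$ rather than to aggregates such as $\sum_jA_{ij}\bar{Z}_j$ (which would introduce a polynomially large $(\#\text{sites})^\varsigma$ penalty), and one must control the ``distant'' $4$-tuples using the shell count around a set of diameter $O(r_n)$ together with the summability of $s\,\hat\alpha(s)$, so that the leading $r_n^6/n$ term survives while everything else is of smaller order. Part (a) is by comparison a routine truncation argument; its only real requirement beyond the stated hypotheses is $\kappa_n\rightarrow\infty$, which is ensured in the regime $m_n=o(n)$ under which \autoref{tvar} (the result this lemma feeds) is proved.
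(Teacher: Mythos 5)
Your proof is correct, and it reaches the same bounds as the paper by a genuinely more self-contained route. The paper does not carry out the moment calculations itself: it maps $\hat{\tilde{\sigma}}^2$ onto the semiparametric HAC framework of Jenish (2016), argues that her Assumption 10 can be replaced by the sandwich \eqref{Asw} (uniform boundedness of $A_{ij}$ for the variance step, and the fact that $\max_{\rho(i,j)\in[r,r+1]}\abs{A_{ij}-1}$ is eventually exactly zero for the bias step), and then reads off part (a) from Step 2 and part (b) from the $O(n^{-1}\beta_n^{3d})$ bound on $H_{1n}$ in Step 1 of her proof, with $\beta_n=2r_n+\kappa_n=O(\sqrt{n/m_n})$ and $d=2$. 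Your argument unpacks exactly that computation: the truncation-plus-tail bound $\sum_{s>\kappa_n}s\,\hat\alpha(s)\to 0$ for the bias, and the fourfold sum with the two-site covariance inequality and shell counting giving $r_n^6/n=O(n^2/m_n^3)$ for the variance. What you gain is that you never have to finesse the fact that $A_{ij}$ is not literally a kernel function $K(x/\beta_n)$, and the role of each hypothesis (the constant $(u+v)^\varsigma=4^\varsigma$ penalty from applying mixing to two-site products, the summability $\sum_s s\,\hat\alpha(s)<\infty$ from \autoref{amix} with $\varsigma\geq 0$) is explicit; what the paper's route buys is brevity and reuse of established near-epoch-dependence machinery that would be needed beyond the purely $\alpha$-mixing case. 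One shared caveat, which you correctly flag rather than hide: both arguments use $\kappa_n\to\infty$ to kill the bias tail, which requires $m_n=o(n)$ rather than only the stated $m_n\to\infty$; this is harmless where the result is applied (in \autoref{tvar}, which assumes $m_n=o(n)$), and the paper's own verification of the $a_{r,n}\to 0$ condition implicitly relies on the same thing.
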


We next discuss the implications of the theorem for $\mathcal{R}_n$ (also see the discussion in \autoref{svarest}) and then conclude with the proof. The designs discussed at the end of \autoref{srmse}, other than under the worst-case decay, choose $m_n$ to be of substantially higher order than $n^{2/3}$. In this case, \autoref{aR_n} yields
\begin{equation*}
  \mathcal{R}_n = \underbrace{\frac{m_n}{n}}_{o(1)} \underbrace{\hat{\tilde{\sigma}}^2}_{\plimarrow \tilde\sigma^2 \geq 0} \plimarrow 0.
\end{equation*}

\noindent For the worst-case decay, $m_n = n^{2/3}$, in which case $\hat{\tilde{\sigma}}^2$ remains asymptotically unbiased for $\tilde\sigma^2 \geq 0$, and we still have $\var(\mathcal{R}_n) = O(m_n^{-1})$, so again $\mathcal{R}_n = o_p(1)$.

\begin{proof}[Proof of \autoref{aR_n}]
  We apply Theorem 4 of \cite{jenish2016spatial}. Note that our setting is a simple mean estimation problem, which is a special case of her semiparametric model $Y_{1in} = h(Y_{2in},\theta_0) + g(X_{in}) + U_{in}$ with $h(Y_{2in},\theta_0) + g(X_{in})=0$ and $U_{in}$ equal to our $\tilde{Z}_i - \E[\bar{\tilde{Z}}] = \tilde{Z}_i - \E[\tilde{Z}_i]$ (by \autoref{amoms}(a)). The moments $m_{\varepsilon_n}(W_{in},\hat\theta,\hat\tau)$ in her formula (13) for the HAC estimator corresponds to our $\tilde{Z}_i - \E[\tilde{Z}_i]$. Other than Assumption 10, her assumptions are either satisfied (increasing domain corresponds to our \autoref{aid} and the moment conditions hold by our \autoref{aboundY} and \autoref{roverlap}) or are irrelevant in our setting. 

  Assumption 10 concerns the properties of the kernel function. For context, note that if, hypothetically, we replaced $A_{ij}$ in $\hat{\tilde{\sigma}}^2$ with its upper bound in \eqref{Asw}, then the kernel $K(\cdot)$ and bandwidth $\beta_n$ in \cite{jenish2016spatial} would correspond in our setting to the uniform kernel $\bm{1}\{\cdot \leq 1\}$ and $2r_n+\kappa_n$, respectively, so that $K(x/\beta_n)$ in Jenish's notation would correspond to our $\bm{1}\{x \leq 2r_n+\kappa_n\}$. 

  Now, because $A_{ij}$ is only bounded by kernel functions but cannot be written as one, we cannot directly verify Assumption 10. However, inspection of the proof reveals that the assumption is used as follows. First, to derive bounds on the variance of the HAC estimator (Step 1 of the proof), uniform boundedness of the kernel is used, but this is trivially satisfied by $A_{ij} \in \{0,1\}$. Second, to derive bounds on the bias (Step 2 of the proof), Assumption 10 is used to show that the term $a_{r,n} = \argmax_{r\leq x\leq r+1} \abs{K(x/\beta_n)-1}\rightarrow 0$ as $n\rightarrow\infty$ for any $r>0$. In our case, $a_{r,n}$ corresponds to $\argmax_{i,j\in\N_n\colon \rho(i,j) \in [r,r+1]} \abs{A_{ij}-1}$. But this has the desired property; it is in fact exactly zero for $n$ sufficiently large due to \eqref{Asw}. 

  Hence, the conclusions of the proof of Jenish's Theorem 4 apply to $\hat{\tilde{\sigma}}^2$, which we now apply to prove our claims. Part (a) of our theorem follows from Step 2 of her proof. Next, in Step 1 of her proof, $H_{2n}, H_{3n}, H_{4n}=0$ in our setting because the data is $\alpha$-mixing rather than near-epoch dependent. Accordingly, the variance bound on $H_{1n}$ in that step implies that the variance of the HAC estimator is $O(n^{-1} \beta_n^{3d})$ where $\beta_n$ is the bandwidth and $d$ is the dimension of the space. In our case, by \eqref{Asw}, the bandwidth corresponds to $\beta_n = 2r_n+\kappa_n = O( \sqrt{n/m_n} )$, so $n^{-1} \beta_n^{3d} = O(n^2 / m_n^3)$, and part (b) of our theorem follows. 
\end{proof}

\section{Proofs}\label{sproofs}

The proofs use the following definitions. Let $\bm{c}(C_k) \in \R^2$ be the centroid of cluster $C_k$. For $s \leq r_n-1$, define
\begin{equation}
  J(s, C_k) = \big\{j \in C_k \colon \rho(\bm{c}(C_k),j)\in [r_n-s-1,r_n-s) \big\}. \label{Nnt}
\end{equation}

\noindent For $s=0$, this is the ``boundary'' of $C_k$, and as we increase $s$, $J(s, C_k)$ moves through contour sets within $C_k$ that are increasingly further from the boundary. Also, for any two sets $S,T \subset \R^2$, let $\rho(S,T) = \min\{\rho(i,j)\colon i\in S, j\in T\}$. 

The proofs make use of the following facts, which are a consequence of Lemma A.1 of \cite{jenish2009central} and use \autoref{aid}. Given that $C_k$ has radius $r_n$, $\abs{C_k} \leq c'r_n^2$ and $\abs{J(0,C_k)} \leq c'r_n$ for some universal constant $c'>0$ that does not depend on $n$ or $k$. Also, since $J(s, C_k)$ is the boundary of a ball of radius $r_n-s$, $\abs{J(s, C_k)} \leq c'(r_n-s)$.

\begin{seclemma}\label{Bbound}
  Recall the definition of $[B]$ from \eqref{AB}. Under the assumptions of \autoref{tmse}, $[B] \lesssim (nm_n)^{-1/2}$.
\end{seclemma}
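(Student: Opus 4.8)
Throughout, write $Z_i=W_iY_i$ with $W_i=T_{1i}/p_{1i}-T_{0i}/p_{0i}$, which is a function of $(D_k)_{k\in\N(i,\kappa_n)}$ only and is uniformly bounded by \autoref{aboundY} and \autoref{roverlap}. The plan is to bound $\abs{\cov(Z_i,Z_j)}$ for each pair $i\in\N_n$, $j\notin\Lambda_i$ and then sum. Two facts drive the bound. First, $j\notin\Lambda_i$ says precisely that no cluster intersects both $\N(i,\kappa_n)$ and $\N(j,\kappa_n)$; since cluster assignments are independent across clusters under \autoref{adesign}, any function of the treatments in $\N(i,\kappa_n)$ is then independent of any function of the treatments in $\N(j,\kappa_n)$. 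Second, by \autoref{aani} the outcome $Y_i(\bm D)$ is, up to error $\psi(\ell)$, a function of the treatments in $\N(i,\ell)$ only. The bound will decay in $\rho(i,j)$ at the rate $\psi$.

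For the per-pair bound I would, for a radius $\ell\geq\kappa_n$, define the localized outcome $Y_i^{(\ell)}(\bm D)=Y_i(\bm d)$ where $\bm d$ agrees with $\bm D$ on $\N(i,\ell)$ and equals zero elsewhere, so that $\abs{Y_i-Y_i^{(\ell)}}\leq\psi(\ell)$ by \autoref{aani}, while $W_iY_i^{(\ell)}$ is a function of the assignments of the clusters meeting $\N(i,\ell)$. Writing $Z_i=W_iY_i^{(\ell)}+W_i(Y_i-Y_i^{(\ell)})$, and likewise for $Z_j$, bilinearity of $\cov$ gives four terms; the three involving at least one factor $W(Y-Y^{(\ell)})$ are each $\lesssim\psi(\ell)$ by boundedness of $W_i,Y_i,W_j$, and the leading term $\cov(W_iY_i^{(\ell)},W_jY_j^{(\ell)})$ vanishes as soon as no cluster meets both $\N(i,\ell)$ and $\N(j,\ell)$. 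For $\ell=\kappa_n$ this disjointness is exactly $j\notin\Lambda_i$; and since every cluster has sup-diameter at most $2r_n$, whenever $\rho(i,j)>2\ell+2r_n$ the enlarged neighborhoods $\N(i,\ell),\N(j,\ell)$ also meet disjoint collections of clusters. Taking $\ell=\kappa_n$ when $\rho(i,j)\leq 3r_n$ and $\ell=\rho(i,j)/6\ (\geq\kappa_n)$ when $\rho(i,j)>3r_n$ then yields, for every $j\notin\Lambda_i$,
\begin{equation*}
  \abs{\cov(Z_i,Z_j)} \lesssim \psi(\kappa_n)\,\bm{1}\{\rho(i,j)\leq 3r_n\} + \psi(\rho(i,j)/6)\,\bm{1}\{\rho(i,j)>3r_n\}.
\end{equation*}

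It then remains to sum this over $j$. By \autoref{aid} and Lemma A.1 of \cite{jenish2009central}, $\#\{j\in\N_n:\rho(i,j)\leq 3r_n\}\lesssim r_n^2$ and $\#\{j\in\N_n:\rho(i,j)\in[s,s+1)\}\lesssim s$. Monotonicity of $\psi$ together with $\sum_{s\geq 1}s\,\psi(s)<\infty$ gives $s^2\psi(s)\lesssim 1$ uniformly, so the first piece contributes $\lesssim r_n^2\,\psi(\kappa_n)\lesssim 1$, and the second $\lesssim\sum_{s>3r_n}s\,\psi(s/6)\lesssim\sum_{s\geq 1}s\,\psi(s/6)\lesssim 1$, both uniformly in $i$. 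Hence $\sum_{j\notin\Lambda_i}\abs{\cov(Z_i,Z_j)}\lesssim 1$ uniformly in $i$, so $\abs{[B]}\leq n^{-2}\sum_{i\in\N_n}\sum_{j\notin\Lambda_i}\abs{\cov(Z_i,Z_j)}\lesssim n^{-1}\leq(nm_n)^{-1/2}$ since $m_n\leq n$.

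The main obstacle is the geometric bookkeeping in the middle step: because the design uses clusters of the large radius $r_n$, one must choose the truncation radius $\ell$ simultaneously large (so $\psi(\ell)$ is small) and small enough that $\N(i,\ell)$ and $\N(j,\ell)$ still intersect disjoint families of clusters, which is where the cluster diameter $2r_n$ and the resulting "$\rho(i,j)>3r_n$ vs.\ $\rho(i,j)\leq 3r_n$" split enter. The cruder bound $\abs{\cov(Z_i,Z_j)}\lesssim\psi(\kappa_n)$ valid for all $j\notin\Lambda_i$ would only yield $[B]\lesssim\psi(r_n/2)$, which is not $O((nm_n)^{-1/2})$ once $m_n$ is of larger order than $n^{1/3}$, so the distance-dependent choice of $\ell$ is essential.
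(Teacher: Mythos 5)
Your proof is correct and follows the same basic strategy as the paper's: replace $Z_i$ by a version measurable with respect to the assignments of the clusters meeting an $\ell$-neighborhood of $i$, kill the leading covariance term by independence of cluster assignments once the two cluster collections are disjoint, absorb the remainders into $\psi(\ell)$, and let $\ell$ grow with $\rho(i,j)$ beyond the scale $r_n$. Two differences in execution are worth noting. First, you localize by plugging in a fixed assignment outside $\N(i,\ell)$, whereas the paper conditions on the treatments in that neighborhood (together with the own cluster) and works with $X_i^\ell=\E[Z_i\mid\F_i(\ell)]$; these are interchangeable here. Second, and more substantively, because the paper's conditioning set always includes the unit's own cluster, it obtains the refined per-pair bound $\psi(\max\{t,\cdot\})$ with $t$ the distance from $i$ to its cluster boundary, and then sums over the contour sets $J(t,C_k)$ --- the bookkeeping it flags as the delicate step. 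You dispense with this entirely by using the uniform bound $\psi(\kappa_n)$ for all $j\notin\Lambda_i$ within distance $O(r_n)$ and invoking $s^2\psi(s)=O(1)$, a correct consequence of monotonicity and $\sum_{s}s\,\psi(s)<\infty$, to control the $O(r_n^2)$ such pairs; this is simpler and in fact yields the stronger conclusion $[B]\lesssim n^{-1}$, which implies the stated $(nm_n)^{-1/2}$ bound since $m_n\le n$. Your geometric constants also check out: a cluster meeting both $Q(i,\ell)$ and $Q(j,\ell)$ forces $\rho(i,j)\le 2\ell+2r_n$, so $\ell=\rho(i,j)/6$ with $\rho(i,j)>3r_n$ guarantees disjoint cluster collections while keeping $\ell\ge\kappa_n$, and $\sum_{s}s\,\psi(s/6)<\infty$ follows from the summability assumption by grouping terms.
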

\begin{proof}
  {\em Step 1.} We first establish covariance bounds. Fix $i,j$ such that $j \not\in \Lambda_i$, the latter defined in \eqref{Lambdai}, and set $s=\rho(i,j)$. Trivially,
  \begin{equation*}
    \abs{\cov(Z_i,Z_j)} = \abs{\cov(Z_i,Z_j)} \bm{1}\{s \leq 4r_n\} + \abs{\cov(Z_i,Z_j)} \bm{1}\{s > 4r_n\}.
  \end{equation*}

  \noindent First consider the case $s \leq 4r_n$. Let $\C_i$ be the cluster containing unit $i$, $\F_i(r) = \{D_j\colon \rho(i,j) \leq r \text{ or } j \in \C_i\}$, and $X_i^r = \E[Z_i \mid \F_i(r)]$. As a preliminary result, we bound the discrepancy between $Z_i$ and $X_i^s$.

  Let $t=\rho(\{i\},J(0,\C_i))$, the distance between $i$ and the nearest unit in the boundary of $\C_i$. By \autoref{aani}, for any $q>0$,
  \begin{multline*}
    \E[\abs{Z_i - X_i^s}^q \mid T_{1i}=1] = p_{1i}^{-q} \E[\abs{Y_i(\bm{D}) - \E[Y_i(\bm{D}) \mid \F_i(s)]}^q \mid T_{1i}=1] \\
    \leq p_{1i}^{-q} \psi(\max\{t,s\})^q.
  \end{multline*}

  \noindent The equality holds because $\F_i(s)$ conditions on $D_i$, and by \autoref{adesign}, $T_{1i}=1$ implies $D_i=1$, which implies $T_{0i}=0$. The inequality holds because $\F_i(s)$ fixes $\{D_j\colon j\in Q(i,t) \cup Q(i,s)\}$ at their realized values. Similarly, $\E[\abs{Z_i - X_i^s}^q \mid T_{0i}=1] \leq p_{0i}^{-q} \psi(\max\{t,s\})^q$, so by the law of total probability and \autoref{roverlap}, for some universal constant $c''>0$,
  \begin{equation}
    \E[\abs{Z_i - X_i^s}^q]^{1/q} \leq c'' \psi(\max\{t,s\}). \label{e1290urt2g0}
  \end{equation}

  Define $R_i = Z_i-X_i^{\kappa_n}$. Since $j\not\in\Lambda_i$, $\abs{\cov(X_i^{\kappa_n},X_j^{\kappa_n})} = 0$ by \autoref{adesign}. Applying the Cauchy-Schwarz and Jensen's inequalities and \eqref{e1290urt2g0} for $q=2$,
  \begin{align*}
    \abs{\cov(Z_i,Z_j)} &\leq \abs{\cov(X_i^{\kappa_n},X_j^{\kappa_n})} + \abs{\cov(X_i^{\kappa_n},R_j)} + \abs{\cov(R_i,X_j^{\kappa_n})} + \abs{\cov(R_i,R_j)} \\
			&\leq 2c'' (\norm{Z_i}_2 \psi(\max\{t,\kappa_n\}) + \norm{Z_j}_2 \psi(\max\{t,\kappa_n\}) + \psi(\max\{t,\kappa_n\})^2) \\
			&\leq c\,\psi(\max\{t,\kappa_n\})
  \end{align*} 

  \noindent for some universal $c>0$.

  Next consider the case $s > 4r_n$. Abbreviate $X_i = X_i^{s/2-r_n}$, and redefine $R_i = Z_i-X_i$. Note that $\rho(Q(i,s/2-r_n),Q(j,s/2-r_n)) > 2r_n$ and $2r_n$ is the diameter of a cluster, so $X_i \indep X_j$. Consequently, following the previous argument,
  \begin{multline}
    \abs{\cov(Z_i,Z_j)} \leq \abs{\cov(X_i,X_j)} + \abs{\cov(X_i,R_j)} + \abs{\cov(R_i,X_j)} + \abs{\cov(R_i,R_j)} \\
    \leq c\,\psi(\max\{t,s/2-r_n\}). \label{covzz0}
  \end{multline}

  {\em Step 2.} For any $c \in \R$, let $\lfloor c \rfloor$ denote $c$ rounded down to the nearest integer. Using the covariance bounds derived in step 1, 
  \begin{multline}
    \frac{1}{n^2} \sum_{i\in\N_n} \sum_{j \not\in \Lambda_i} \abs{\cov(Z_i,Z_j)} \leq \frac{c}{n^2} \sum_{k=1}^{m_n} \sum_{s=1}^{\lfloor 2R_n \rfloor} \sum_{t=0}^{\lfloor \min\{s,r_n-1\} \rfloor} \sum_{i\in J(t,C_k)} \sum_{j\not\in\Lambda_i} \bm{1}\{\rho(i,j)\in [s-1,s)\} \\
    \times \big( \psi(\max\{t,r_n/2\}) \bm{1}\{s \leq 4r_n\} + \psi(\max\{t,s/2-r_n\}) \bm{1}\{s > 4r_n\} \big) \\
    \equiv [B1] + [B2], \label{covzz}
  \end{multline} 

  \noindent where $[B1]$ takes the part involving $s \leq 4r_n$ and $[B2]$ the remainder. Note that $t$ can be at most $r_n-1$ because $J(r_n-1,C_k)$ is the 1-ball centered at the centroid of $C_k$, and it can be at most $s$ because $\rho(i,j)\in [s-1,s)$ and $j\not\in C_k$ since $j\not\in\Lambda_i$.

  As discussed at the start of this section, $\sum_{j\not\in\Lambda_i} \bm{1}\{\rho(i,j)\in [s-1,s)\} \leq \sum_{j\in\N_n} \bm{1}\{\rho(i,j)\in [s-1,s)\} \leq c'\,s$ and $\abs{J(t,C_k)} \leq c'(r_n-t)$ for some universal $c'>0$ by \autoref{aid}. Then by \autoref{aani},
  \begin{align}
    [B1] &\leq c\, \frac{m_n}{n^2} \sum_{s=1}^{\lfloor 4r_n \rfloor} c's \sum_{t=0}^{\lfloor \min\{s,r_n-1\} \rfloor} c'(r_n-t) \psi(t) \nonumber\\
	 &\lesssim \frac{m_n}{n^2} \sum_{s=1}^{\lfloor 4r_n \rfloor} s\left( r_n \sum_{t=0}^\infty \psi(t) + \sum_{t=0}^\infty t\,\psi(t) \right) 
    \lesssim \frac{m_n}{n^2} r_n^2 r_n \lesssim \frac{1}{\sqrt{nm_n}}. \label{B1}
  \end{align} 

  \noindent Finally,
  \begin{align}
    [B2] &\leq c\, \frac{m_n}{n^2} \sum_{s=\lfloor 4r_n \rfloor}^{\lfloor 2R_n \rfloor} c's \sum_{t=0}^{\lfloor \min\{s,r_n-1\} \rfloor} c'(r_n-t) \psi(s/2-r_n) \nonumber\\
	 &\lesssim \frac{m_n}{n^2} \sum_{t=0}^{\lfloor r_n-1 \rfloor} (r_n-t) \sum_{s=\lfloor 4r_n \rfloor}^{\lfloor 2R_n \rfloor} s\, \psi(s/2-r_n) 
    \lesssim \frac{m_n}{n^2} r_n^2 \sum_{s=\lfloor 4r_n \rfloor}^{\lfloor 2R_n \rfloor} s\,\psi(s/2-r_n) \nonumber\\
	 &\lesssim \frac{m_n}{n^2} r_n^2 r_n \lesssim \frac{1}{\sqrt{nm_n}}. \label{B2}
  \end{align}
\end{proof}

\begin{proof}[Proof of \autoref{tclt}]
  Recall that $\F_i = \{D_j\colon j \in \C_i \cup \N(i,\kappa_n)\}$, and define $R_i = Z_i - \E[Z_i \mid \F_i]$. 

  {\em Step 1.} We show that
  \begin{equation*}
    \E\left[ \left( \sqrt{m_n} \frac{1}{n} \sum_{i\in\N_n} R_i \right)^2 \right] = o(1).
  \end{equation*}

  \noindent Recalling \eqref{Lambdai}, the left-hand side equals
  \begin{equation*}
    \frac{m_n}{n^2} \sum_{i\in\N_n} \sum_{j\in\Lambda_i} \E[R_iR_j] + \frac{m_n}{n^2} \sum_{i\in\N_n} \sum_{j\not\in\Lambda_i} \E[R_iR_j] \equiv [C] + [D].
  \end{equation*}

  \noindent If $j\in\Lambda_i$, then $\rho(i,j) \leq 3r_n$. Using this fact and \eqref{e1290urt2g0} with $s=r_n/2$,
  \begin{align*}
    \abs{[C]} &\leq \frac{m_n}{n^2} \sum_{i\in\N_n} \E[R_i^2] + \frac{m_n}{n^2} \sum_{i\in\N_n} \sum_{s=1}^{\lfloor 3r_n \rfloor} \sum_{j\neq i} (c''\psi(r_n/2))^2 \bm{1}\{\rho(i,j)\in [s-1,s)\} \\ 
	      &\lesssim \frac{m_n}{n} + \frac{m_n}{n} \sum_{s=1}^{\lfloor 3r_n \rfloor} s\, \psi(r_n/2)^2 \lesssim \frac{m_n}{n} + \psi(r_n/2)^2 = o(1).
  \end{align*} 

  For $\abs{[D]}$, we first establish some covariance bounds. Fix $i,j$ such that $j\not\in\Lambda_i$, and let $s=\rho(i,j)$. Let $t=\rho(\{i\},J(0,\C_i))$, the distance between $i$ and the nearest unit in the boundary of $\C_i$. By \autoref{aani}, 
  \begin{equation*}
    \E[\abs{R_i}^2 \mid T_{1i}=1] = p_{1i}^{-2} \E\big[\abs{Y_i(\bm{D}) - \E[Y_i(\bm{D}) \mid \F_i]}^2 \mid T_{1i}=1\big] \leq p_{1i}^{-2} \psi(t)^2.
  \end{equation*} 

  \noindent Similarly, $\E[\abs{R_i}^2 \mid T_{0i}=1] \leq p_{0i}^{-2} \psi(t)^2$, so by the law of total probability and \autoref{roverlap}, for some universal constant $c>0$,
  \begin{equation*}
    \E[\abs{R_i}^2] \leq c\, \psi(t)^2 \quad\text{and}\quad \abs{\cov(R_i,R_j)} \leq c\,\psi(t). 
  \end{equation*}

  We derive an alternate bound for the case $s > 4r_n$. Let $\F_i(r) = \{D_j\colon \rho(i,j) \leq r \text{ or } j \in \C_i\}$ and $X_i = \E[Z_i \mid \F_i(s/2-r_n)]$. Then 
  \begin{equation*}
    S_i \equiv R_i - \E[R_i \mid \F_i(s/2-r_n)] = \tilde R_i \equiv Z_i - X_i
  \end{equation*}

  \noindent since $\N(i,s/2-r_n) \supseteq \N(i,\kappa_n)$. Moreover, $X_i \indep X_j$ since $\rho(Q(i,s/2-r_n),Q(j,s/2-r_n)) > 2r_n$ and $2r_n$ is the diameter of a cluster. Consequently, by \eqref{covzz0},
  \begin{align*}
    \abs{\cov(R_i,R_j)} &\leq \abs{\cov(X_i,X_j)} + \abs{\cov(X_i,S_j)} + \abs{\cov(S_i,X_j)} + \abs{\cov(S_i,S_j)} \\
			&= \abs{\cov(X_i,\tilde R_j)} + \abs{\cov(\tilde R_i,X_j)} + \abs{\cov(\tilde R_i,\tilde R_j)} \leq c\,\psi(s/2-r_n). 
  \end{align*}

  \noindent Applying the covariance bounds,
  \begin{multline*}
    \abs{[D]} \leq \frac{m_n}{n^2} \sum_{i\in\N_n} \sum_{j \not\in \Lambda_i} \abs{\E[R_iR_j]} \\ \leq c\frac{m_n}{n^2} \sum_{k=1}^{m_n} \sum_{s=1}^{\lfloor 2R_n \rfloor} \sum_{t=0}^{\lfloor \min\{s,r_n-1\} \rfloor} \sum_{i\in J(t,C_k)} \sum_{j\not\in\Lambda_i} \bm{1}\{\rho(i,j)\in [s-1,s)\} \\
    \times \big( \psi(t) \bm{1}\{s \leq 4r_n\} + \psi(s/2-r_n) \bm{1}\{s > 4r_n\} \big),
  \end{multline*}

  \noindent which is order $(m_n/n)^{1/2} = o(1)$ by an argument similar to \eqref{B1} and \eqref{B2}.

  {\em Step 2.} We show that
  \begin{equation*}
    \sigma_n^{-1} \sqrt{m_n} \frac{1}{n} \sum_{i\in\N_n} \big( \E[Z_i \mid \F_i] - \E[Z_i] \big) \dlimarrow \N(0,1).
  \end{equation*}

  \noindent First, let $\tilde\sigma_n^2 = \var(\sqrt{m_n}n^{-1} \sum_{i\in\N_n} \E[Z_i \mid \F_i])$. By Minkowski's inequality,
  \begin{equation}
    \abs{\sigma_n - \tilde\sigma_n} \leq \var\left( \sqrt{m_n} \frac{1}{n} \sum_{i\in\N_n} (Z_i - \E[Z_i \mid \F_i]) \right)^{1/2}, \label{mink}
  \end{equation}

  \noindent which is $o(1)$ by step 1. By \autoref{anondeg}, it then suffices to show
  \begin{equation}
    \tilde\sigma_n^{-1} \sqrt{m_n} \frac{1}{n} \sum_{i\in\N_n} \big( \E[Z_i \mid \F_i] - \E[Z_i] \big) \dlimarrow \N(0,1). \label{dpclt}
  \end{equation}

  We apply \autoref{ross} with $X_i = n^{-1}m_n^{1/2} (\E[Z_i \mid \F_i] - \E[Z_i])$ and dependency graph $\bm{A}$ defined after \eqref{decomp}. The maximum degree of $\bm{A}$ is at most $16\max_k \abs{C_k}$, 
  and $\max_k \abs{C_k} = O(n/m_n)$. Therefore, by Assumptions \ref{aboundY} and \ref{anondeg},
  \begin{equation*}
    \eqref{wass} \lesssim \left( \frac{n}{m_n} \right)^2 n \left( \frac{\sqrt{m_n}}{n} \right)^3 + \left( \frac{n}{m_n} \right)^{3/2} \sqrt{n\left( \frac{\sqrt{m_n}}{n} \right)^4} \lesssim m_n^{-1/2}.
  \end{equation*}

  \noindent Since $m_n\rightarrow\infty$, \eqref{dpclt} follows.
\end{proof}

\begin{seclemma}[\cite{ross2011fundamentals}, Theorem 3.6]\label{ross}
  Let $\{X_i\}_{i=1}^n$ be random variables with dependency graph $\bm{A}$ such that $\E[X_i^4]<\infty$ and $\E[X_i]=0$. Define $\sigma^2 \equiv \var(\sum_{i=1}^n X_i)$, $\mathcal{W} = \sum_{i=1}^n X_i/\sigma^2$, and $\Psi = \max_{i=1,\dots,n} \sum_{j=1}^n A_{ij}$. For $\mathcal{Z} \sim \mathcal{N}(0,1)$,
  \begin{equation}
    d(\mathcal{W},\mathcal{Z}) \leq \frac{\Psi^2}{\sigma^3} \sum_{i=1}^n \E[\abs{X_i}^3] + \frac{\sqrt{28} \Psi^{3/2}}{\sqrt{\pi} \sigma^2} \left( \sum_{i=1}^n \E[X_i^4] \right)^{1/2}, \label{wass}
  \end{equation}

  \noindent where $d(\cdot,\cdot)$ is the Wasserstein distance.
\end{seclemma}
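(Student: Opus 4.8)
\noindent The plan is to prove this by Stein's method; it is a standard result, so here I only sketch the structure of the argument in \cite{ross2011fundamentals}. After rescaling we may assume $\sigma=1$, so that $\mathcal{W}=\sum_{i=1}^n X_i$. For a $1$-Lipschitz test function $h$, I would introduce the solution $f$ of the Stein equation $f'(w)-wf(w)=h(w)-\E[h(\mathcal{Z})]$ with $\mathcal{Z}\sim\N(0,1)$, and invoke the standard bounds $\norm{f'}_\infty\leq\sqrt{2/\pi}$ and $\norm{f''}_\infty\leq 2$. Since the Wasserstein distance equals the supremum of $\abs{\E[h(\mathcal{W})]-\E[h(\mathcal{Z})]}$ over such $h$, it then suffices to bound $\abs{\E[f'(\mathcal{W})-\mathcal{W}f(\mathcal{W})]}$ uniformly over the resulting family of $f$.

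The key step is to exploit the dependency graph $\bm{A}$. For each $i$ set $N_i=\{j\colon A_{ij}=1\}$ (which contains $i$, as $A_{ii}=1$) and $D_i=\sum_{j\in N_i}X_j$; then $X_i$ is independent of $\mathcal{W}-D_i$, so $\E[X_if(\mathcal{W}-D_i)]=0$. Writing $\mathcal{W}f(\mathcal{W})=\sum_iX_if(\mathcal{W})$, subtracting these vanishing terms, and Taylor-expanding $f(\mathcal{W})-f(\mathcal{W}-D_i)$ to first order, I would obtain $\E[\mathcal{W}f(\mathcal{W})]=\E\big[f'(\mathcal{W})\sum_iX_iD_i\big]+\mathcal{E}_1$ with $\abs{\mathcal{E}_1}\leq\tfrac12\norm{f''}_\infty\sum_i\E[\abs{X_i}D_i^2]$. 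Because $A_{ij}=0$ forces $X_i\indep X_j$ (apply the dependency-graph property to the singletons $\{i\}$ and $\{j\}$), one has $\sum_i\E[X_iD_i]=\sum_{i,j\colon A_{ij}=1}\E[X_iX_j]=\var(\mathcal{W})=1$, whence $\E[f'(\mathcal{W})]-\E[\mathcal{W}f(\mathcal{W})]=\E\big[f'(\mathcal{W})(1-\sum_iX_iD_i)\big]-\mathcal{E}_1$, and it remains to control the two pieces.

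For $\mathcal{E}_1$ I would use $D_i^2\leq\Psi\sum_{j\in N_i}X_j^2$ together with H\"older's inequality and the symmetry of $\bm{A}$ to get $\sum_i\E[\abs{X_i}D_i^2]\lesssim\Psi^2\sum_i\E\abs{X_i}^3$, which with $\norm{f''}_\infty\leq2$ produces the first term of \eqref{wass}. For $\E[f'(\mathcal{W})(1-\sum_iX_iD_i)]$, I would write $1-\sum_iX_iD_i=\sum_i(\E[X_iD_i]-X_iD_i)$ and, letting $\bar D_i$ be the sum of $X_j$ over the two-step neighborhood of $i$ in $\bm{A}$, note that $X_iD_i$ is independent of $\mathcal{W}-\bar D_i$, so $\E[f'(\mathcal{W}-\bar D_i)(\E[X_iD_i]-X_iD_i)]=0$; replacing $f'(\mathcal{W})$ by $f'(\mathcal{W})-f'(\mathcal{W}-\bar D_i)$, bounding the latter by $\norm{f''}_\infty\abs{\bar D_i}$, applying Cauchy--Schwarz (first in $\bar D_i$ against $\E[X_iD_i]-X_iD_i$, then to expand the at most $\Psi^2$ and $\Psi$ constituent summands of each factor), and summing over $i$ yields the second term of \eqref{wass}. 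Matching the exact constants as in \cite{ross2011fundamentals} and undoing the normalization by $\sigma$ finishes the argument.

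The probabilistic content is entirely contained in the two independence reductions (using $N_i$ and its two-step enlargement for $\bar D_i$). I expect the main obstacle to be purely bookkeeping: keeping the powers of $\Psi$ generated by the double-indexed sums, and the moment exponents generated by the repeated H\"older and Cauchy--Schwarz steps, exactly aligned so that the final bound has the precise form in \eqref{wass} rather than merely the right order.
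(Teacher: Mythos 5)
The paper offers no proof of this lemma---it is imported verbatim from Ross (2011), Theorem 3.6---so the only meaningful comparison is with the argument in that source. Your skeleton is the right one and matches Ross's: Stein equation with $\norm{f'}_\infty\leq\sqrt{2/\pi}$, $\norm{f''}_\infty\leq 2$; the identity $\E[X_if(\mathcal{W}-D_i)]=0$; the Taylor step giving $\tfrac12\norm{f''}_\infty\sum_i\E[\abs{X_i}D_i^2]$; and $\sum_i\E[X_iD_i]=\var(\mathcal{W})$. Your treatment of the first error term is correct and, via $\E\abs{X_iX_jX_k}\leq\tfrac13(\E\abs{X_i}^3+\E\abs{X_j}^3+\E\abs{X_k}^3)$ and the degree count, yields exactly $\Psi^2\sigma^{-3}\sum_i\E\abs{X_i}^3$.

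The genuine gap is in your handling of $\E\big[f'(\mathcal{W})\sum_i(\E[X_iD_i]-X_iD_i)\big]$, and it is not mere bookkeeping. Your second smoothing step---replacing $f'(\mathcal{W})$ by $f'(\mathcal{W})-f'(\mathcal{W}-\bar D_i)$ and paying $\norm{f''}_\infty\abs{\bar D_i}$---introduces an extra random factor whose second moment scales like $\Psi^2\sum_j\E[X_j^2]$ over the two-step neighborhood. After the Cauchy--Schwarz steps you describe, the best you can extract is of order $\Psi^3\big(\sum_j\E X_j^2\big)^{1/2}\big(\sum_i\E X_i^4\big)^{1/2}$ (up to a further loss from $\E\bar D_i^2$), not $\Psi^{3/2}\big(\sum_i\E X_i^4\big)^{1/2}$: you lose a factor of order $\Psi^{3/2}$. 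In the application in the proof of \autoref{tclt}, where $\Psi\asymp n/m_n$ and $\abs{X_i}\lesssim\sqrt{m_n}/n$, this weaker bound gives $n\,m_n^{-3/2}$ instead of $m_n^{-1/2}$, which fails to vanish at the recommended design $m_n=n^{2/3}$---so the lemma as stated is genuinely stronger than what your route delivers. Ross's argument avoids the second Taylor expansion entirely: bound the term by $\norm{f'}_\infty\,\E\abs{1-\sum_iX_iD_i}\leq\sqrt{2/\pi}\,\var\big(\sum_iX_iD_i\big)^{1/2}$, and then bound $\var\big(\sum_iX_iD_i\big)\leq 14\,\Psi^3\sigma^{-4}\sum_i\E[X_i^4]$ by expanding the variance as a sum over quadruples $(i,j,k,l)$ with $j\in N_i$, $l\in N_k$, noting the summand vanishes unless the neighborhoods interact (a count of order $\Psi^3$ per $i$), and applying the arithmetic--geometric mean inequality to $\E\abs{X_iX_jX_kX_l}$. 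That is the route that produces the $\sqrt{28/\pi}\,\Psi^{3/2}\sigma^{-2}$ constant in \eqref{wass}.
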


\end{appendix}
%
%

\begin{acks}[Acknowledgments]
  The author thanks the referees and associate editor for helpful comments that improved the exposition of the paper.
\end{acks}
%

\begin{supplement}
\stitle{supplement.zip}
\sdescription{This zip file contains a PDF with proofs omitted in this text and code used to produce the simulation results in \autoref{smc}.}
\end{supplement}


\bibliographystyle{imsart-number} 
\bibliography{../scr}       



\ifarXiv
    \foreach \x in {1,...,\numbersupplementpages}
    {
        \clearpage
        \includepdf[pages={\x}]{\supplementfilename}
    }
\fi


\end{document}

\end{document}